\newtheorem{thm}{Theorem}[section]
\newtheorem{lma}[thm]{Lemma}
\newtheorem{cor}[thm]{Corollary}
\newtheorem*{claim}{Claim}
\DeclareMathOperator{\comp}{comp}
\DeclareMathOperator{\splitgraph}{split}
\DeclareMathOperator{\partition}{part}
\DeclareMathOperator{\poss}{poss}
\begin{document}  

\title{Spanning trees and the complexity of flood-filling games}
\date{\today}
\author{Kitty Meeks and Alexander Scott\\
\small{Mathematical Institute, University of Oxford, 24-29 St Giles, Oxford OX1 3LB, UK} \\
\texttt{\small{\{meeks,scott\}@maths.ox.ac.uk}}}
\maketitle

\begin{abstract}
We consider problems related to the combinatorial game (Free-) Flood-It, in which players aim to make a coloured graph monochromatic with the minimum possible number of flooding operations.  We show that the minimum number of moves required to flood any given graph $G$ is equal to the minimum, taken over all spanning trees $T$ of $G$, of the number of moves required to flood $T$.  This result is then applied to give two polynomial-time algorithms for flood-filling problems.  Firstly, we can compute in polynomial time the minimum number of moves required to flood a graph with only a polynomial number of connected subgraphs.  Secondly, given any coloured connected graph and a subset of the vertices of bounded size, the number of moves required to connect this subset can be computed in polynomial time.
\end{abstract}

\section{Introduction}

In this paper we consider several problems related to the one-player combinatorial game (Free-)Flood-It, introduced by Arthur, Clifford, Jalsenius, Montanaro and Sach \cite{arthurFUN}.  The game is played on a coloured graph, and the goal is to make the entire graph monochromatic with as few moves as possible.  A move involves picking a vertex $v$ and a colour $d$, and giving all vertices in the same monochromatic component as $v$ colour $d$.  

When the game is played on a planar graph, it can be regarded as modelling repeated use of the flood-fill tool in Microsoft Paint.  Implementations of the game, played on a square grid, are widely available online, and include a flash game \cite{flash} as well as popular smartphone apps \cite{iphoneapp,androidapp}.  Mad Virus \cite{madvirus} is a version of the same game played on a hexagonal grid, and the Honey Bee Game \cite{honeybee} is a two player variant also played on a hexagonal grid, which has been analysed by Fleischer and Woeginger \cite{fleischer10}.

For any coloured graph, we define the following problems.
\begin{itemize}
\item \textsc{Free-Flood-It} is the problem of determining the minimum number of moves required to flood the coloured graph.  The number of colours may be unbounded.
\item $c$-\textsc{Free-Flood-It} is the variant of \textsc{Free-Flood-It} in which only colours from some fixed set of size $c$ are used.
\end{itemize}
A related problem which naturally arises when considering algorithms for Flood-It is to consider the number of moves required to connect a given set of vertices.
\begin{itemize}
\item $k$-\textsc{Linking-Flood-It} is the problem, given a subset $U$ of at most $k$ vertices, of determining the minimum number of moves required to create a monochromatic component containing $U$.  The number of colours may be unbounded.
\end{itemize}
There is another variant of the game in which all moves must be played at some fixed root vertex; we refer to the problem of determining the minimum number of moves required to flood the board in this case as \textsc{Fixed-Flood-It}.\footnote{\textsc{Fixed-Flood-It} is often referred to as simply \textsc{Flood-It}, but we use the longer name to avoid confusion with the free version.}

In \cite{arthurFUN}, Arthur, Clifford, Jalsenius, Montanaro and Sach proved that $c$-\textsc{Free-Flood-It} is NP-hard in the case of an $n \times n$ grid, for every $c \geq 3$, and that this result also holds for the fixed variant.  Lagoutte, Noual and Thierry \cite{lagoutte,lagoutte11} showed that the same result holds when the game is played instead on a hexagonal grid, as in Mad Virus or a one-player version of the Honey Bee Game respectively.  Fleischer and Woeginger \cite{fleischer10} proved that $c$-\textsc{Fixed-Flood-It} remains NP-hard when restricted to trees, for every $c \geq 4$,\footnote{Note that this proof does in fact require four colours, not three as stated in a previous version of \cite{fleischer10}.} and Fukui, Nakanishi, Uehara, Uno and Uno \cite{fukui} demonstrated that this result can be extended to show the hardness of $c$-\textsc{Free-Flood-It} in the same setting.

A few positive results are known, however.  2-\textsc{Free-Flood-It} is solvable in polynomial time on arbitrary graphs, a result shown independently by Clifford et.~al.~\cite{clifford}, Lagoutte \cite{lagoutte} and Meeks and Scott \cite{general}.  It is also known that \textsc{Free-Flood-It} is solvable in polynomial time on paths \cite{clifford,general,fukui} and cycles \cite{fukui}.  Although $c$-\textsc{Free-Flood-It} is NP-hard on rectangular $3 \times n$ boards for any $c \geq 4$ \cite{general}, $c$-\textsc{Free-Flood-It} is fixed parameter tractable with parameter $c$ when restricted to $2 \times n$ boards (Meeks and Scott \cite{2xn}), and the fixed variant can be solved in linear time in this situation \cite{clifford}.  Meeks and Scott \cite{general} also show that 2-\textsc{Linking Flood-It} can be solved in polynomial time for arbitrary graphs, even when the number of colours is unbounded.

In this paper we give some more general tractability results, which do not require the number of colours to be bounded.  Our first such result is that \textsc{Free-Flood-It} can be solved in polynomial time on the class of graphs which have only a polynomial number of connected subgraphs.  This class includes a number of interesting families of graphs, and the result implies a conjecture from \cite{general} that the problem can be solved in polynomial time on subdivisions of any fixed graph.  We also consider the fixed variant of the game on the same class of graphs, and show that the minimum number of moves can again be computed in polynomial time in this case.  This substantially extends results of Lagoutte \cite{lagoutte}, and Fukui, Nakanishi, Uehara, Uno and Uno \cite{fukui}, who showed that the fixed and free variants respectively are polynomially solvable on cycles.

We then go on to consider $k$-\textsc{Linking-Flood-It}.  We prove that, for any fixed $k$, it is possible to solve $k$-\textsc{Linking-Flood-It} in polynomial time, without imposing any restrictions on the underlying graph or initial colouring.

The key tool we use to prove these tractability results is a theorem which allows us to consider only spanning trees of the graph $G$ in order to determine the minimum number of moves required to flood it.  Clearly this does not immediately allow us to solve \textsc{Free-Flood-It}, as the problem remains hard even on trees, and a graph will in general have an exponential number of spanning trees.  However, the result does provide a very useful method for reasoning about the behaviour of sequences of flooding operations on arbitrary graphs.

We begin in Section \ref{notation} with some notation and definitions, then in Section \ref{trees} we prove our result about spanning trees, and a number of corollaries.  Section \ref{poly} is concerned with graphs containing only a polynomial number of connected subgraphs, and in Section \ref{linking} we consider the complexity of $k$-\textsc{Linking-Flood-It}.

\section{Notation and Definitions}
\label{notation}

Suppose the game is played on a graph $G=(V,E)$, with an initial colouring $\omega$ (not necessarily proper) using colours from the \emph{colour-set} $C$.  Each move $m=(v,d)$ then involves choosing some vertex $v \in V$ and a colour $d \in C$, and assigning colour $d$ to all vertices in the same monochromatic component as $v$.  The goal is to make every vertex in $G$ the same colour, using as few moves as possible. 
We define $m_G(G,\omega,d)$ to be the minimum number of moves required in the free variant to give all vertices of $G$ colour $d$, and $m_G(G,\omega)$ to be $\min_{d \in C}m(G,\omega,d)$.

Let $A$ be any subset of $V$.  We set $m_G(A,\omega,d)$ to be the minimum number of moves we must play in $G$ (with initial colouring $\omega$) to create a monochromatic component of colour $d$ that contains every vertex in $A$, and $m_G(A,\omega)=\min_{d \in C}m_G(A,\omega,d)$.  We write $\omega|_A$ for the colouring $\omega$ restricted to the subset $A$.  We say a move $m = (v,d)$ is \emph{played in} $A$ if $v \in A$, and that $A$ is \emph{linked} if it is contained in a single monochromatic component.  Subsets $A,B \subseteq V$ are \emph{adjacent} if there exists $ab \in E$ with $a \in A$ and $b \in B$.  We will use the same notation when referring to (the vertex-set of) a subgraph $H$ of $G$ as for a subset $A \subseteq V(G)$.

For any vertex $v \in V$, we write $\comp_G(v,\omega)$ to denote the monochromatic component of $G$, with respect to $\omega$, that contains $v$.  Given any sequence of moves $S$ on a graph $G$ with initial colouring $\omega$, we denote by $S(\omega,G)$ the new colouring obtained by playing $S$ in $G$.

\section{Spanning trees}
\label{trees}

In this section we investigate the relationship between the number of moves required to flood a connected graph $G$ and the number of moves required to flood spanning trees of $G$.  For any connected graph $G$, let $\mathcal{T}(G)$ denote the set of all spanning trees of $G$.  We prove the following result.

\begin{thm}
Let $G$ be a connected graph with colouring $\omega$ from colour-set $C$.  Then, for any $d \in C$,
$$m_G(G,\omega,d) = \min_{T \in \mathcal{T}(G)} m_T(T,\omega,d).$$
\label{min-tree}
\end{thm}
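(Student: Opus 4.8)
The plan is to prove the two inequalities separately. For the direction $m_G(G,\omega,d)\le\min_{T\in\mathcal{T}(G)}m_T(T,\omega,d)$, I would fix any spanning tree $T$ together with a sequence $S$ that floods $T$ to colour $d$, and argue that playing the \emph{same} sequence $S$ in $G$ also floods $G$ to colour $d$; taking the minimum over $T$ then gives the inequality. Since $T$ is a connected spanning subgraph of $G$, for any colouring $\omega'$ we have $\comp_T(v,\omega')\subseteq\comp_G(v,\omega')$, so each move recolours at least as large a set in $G$ as in $T$. I would make this precise with an inductive \emph{domination} argument comparing the colourings $S_i(\omega,T)$ and $S_i(\omega,G)$ obtained after the first $i$ moves. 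The delicate point here is that the two colourings genuinely \emph{diverge}: once $G$ recolours strictly more vertices than $T$, the vertex colourings no longer agree, so the invariant must be chosen to record that $G$ stays ``at least as flooded'' as $T$ rather than identical to it.

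The reverse inequality $\min_{T}m_T(T,\omega,d)\le m_G(G,\omega,d)$ is the constructive heart of the argument, and here I would build an explicit spanning tree from an optimal flooding sequence of $G$. Fix a sequence $S=(m_1,\dots,m_t)$ with $t=m_G(G,\omega,d)$ that floods $G$ to colour $d$, and let $\pi_i$ be the partition of $V$ into monochromatic components of $S_i(\omega,G)$. The key structural observation is that recolouring a whole monochromatic component never splits a component, so the $\pi_i$ only \emph{coarsen}: $\pi_0\preceq\pi_1\preceq\cdots\preceq\pi_t=\{V\}$. I would then assemble $T$ from two kinds of edges: first, a spanning tree inside each initial monochromatic block of $\pi_0$ (each is connected in $G$); second, for every block-merge occurring along the sequence, one edge of $G$ joining the two merging super-blocks. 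A count shows this uses $n-1$ edges, and each merge-edge joins two parts that are still disconnected in the tree built so far, so the result is acyclic, hence a genuine spanning tree of $G$.

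It then remains to verify that $S$ floods this $T$ in the same $t$ moves. I would prove by induction on $i$ that $S_i(\omega,T)=S_i(\omega,G)$ as vertex-colourings and that the monochromatic-component partitions of $T$ and $G$ coincide at every step. Given that the partitions agree at time $i$, move $m_{i+1}$ recolours exactly the same component in $T$ as in $G$, so the colourings still agree; and the partitions agree at time $i+1$ because every merge in $G$ is realised in $T$ (we retained a connecting edge of the correct colour) while every merge in $T$ must also occur in $G$ (as $T\subseteq G$ with equal colourings, an edge witnessing a merge in $T$ witnesses the same merge in $G$). Since $\pi_t$ is the single block $V$ with colour $d$, this shows $m_T(T,\omega,d)\le t$.

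The step I expect to be the main obstacle is this faithfulness induction, and in particular its two halves: ensuring \emph{acyclicity}, which relies on the merges forming a hierarchical (laminar) structure so that each retained edge always joins two currently-distinct tree-components — some care is needed when a single move simultaneously merges one block with several neighbours, where the connecting edges must be chosen to join the merging blocks acyclically — and proving the ``no spurious merges'' direction, that retaining edges for $G$'s merges does not accidentally cause an \emph{extra} merge in $T$ at some step. I expect the latter to fall out of the containment $T\subseteq G$ together with the equality of colourings, but setting up the induction so that both implications close simultaneously is where the real work lies.
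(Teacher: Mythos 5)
Your second inequality, $\min_{T}m_T(T,\omega,d)\le m_G(G,\omega,d)$, is handled correctly, and by a genuinely different route from the paper's: the paper inducts on the length of an optimal sequence for $G$, splits at the \emph{last} move, and glues spanning trees of the penultimate components using a separate lemma (Lemma \ref{tree-splitting}) about flooding a tree partitioned into two subtrees, whereas you build one spanning tree globally out of the merge history and replay the very same sequence on it. Your faithfulness induction does close: given that the $T$- and $G$-partitions and colourings agree at stage $i$, a spurious $T$-merge would exhibit two distinct, equally coloured, $G$-adjacent monochromatic components of $G$, contradicting maximality; and each retained merge-edge joins two currently distinct pieces, so with $(n-k)+(k-1)=n-1$ edges the construction is indeed a spanning tree. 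This half is fine, and arguably cleaner than the paper's.

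The gap is in the first inequality. The lemma you propose --- that any sequence $S$ flooding a spanning tree $T$ with colour $d$ also floods $G$ when replayed --- is \emph{false}, so no domination invariant of the kind you describe can exist. Let $T$ be the path $v_1v_2v_3v_4v_5$ with colours $a,b,c,b,a$ and let $G=T+v_2v_4$. The sequence $(v_2,a),(v_3,b),(v_4,c),(v_3,a)$ floods $T$ with colour $a$ (the colourings evolve as $a,b,c,b,a\to a,a,c,b,a\to a,a,b,b,a\to a,a,c,c,a\to a,a,a,a,a$). Played in $G$, the first move recolours the component $\{v_2,v_4\}$ and creates the component $\{v_1,v_2,v_4,v_5\}$ of colour $a$; the third move then drags this entire component to colour $c$, and the final state is $c,c,a,c,c$, which is not monochromatic. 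So the divergence you flag is fatal, not a bookkeeping issue: $G$ being ``more flooded'' at one stage can cause a later move to pull a large component away from the colour its neighbours are about to receive. Your counterexample sequence must of course be non-optimal for $T$ (the theorem is true), but your proposed proof nowhere uses optimality of $S$ --- a move-by-move invariant applies to any flooding sequence --- so the argument cannot be repaired without a substantially new idea. This is precisely why the paper's proof of this direction is so much more elaborate: it first establishes (Lemmas \ref{add-e} and \ref{tree-preserves-cmpts}) that one may choose a $d$-minimal spanning tree whose monochromatic components coincide with those of $G$, replays only the \emph{first} move of an optimal sequence for that tree (which then provably has the identical effect in $G$), and re-selects a fresh component-preserving minimal tree before every subsequent move.
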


Since it remains hard to solve 3-\textsc{Free-Flood-It} on trees, this result does not imply that the number of moves required to flood a graph with only a polynomial number of spanning trees can be computed in polynomial time.  However, this equality gives rise to a number of corollaries, proved later in this section, which are then applied to give polynomial-time algorithms for various flood-filling problems in Sections \ref{poly} and \ref{linking}.

To prove the theorem, we demonstrate that the inequalities $m_G(G,\omega,d) \geq \min_{T \in \mathcal{T}(G)} m_T(T,\omega,d)$ and $m_G(G,\omega,d) \leq \min_{T \in \mathcal{T}(G)} m_T(T,\omega,d)$ must both hold.  We begin with an auxiliary result which is crucial to the proof of both inequalities, as it allows us to consider independently optimal sequences to flood disjoint subtrees of a tree.

\begin{lma}
Let $T$ be a tree, with colouring $\omega$ from colour-set $C$, and let $A$ and $B$ be disjoint subsets of $V(T)$ such that $V(T) = A \cup B$ and $T[A], T[B]$ are connected.  Then, for any $d \in C$,
$$m_T(T,\omega,d) \leq m_{T[A]}(A,\omega|_A,d) + m_{T[B]}(B,\omega|_B,d).$$
\label{tree-splitting}
\end{lma}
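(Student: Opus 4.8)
The plan is to exhibit an explicit sequence of at most $m_{T[A]}(A,\omega|_A,d)+m_{T[B]}(B,\omega|_B,d)$ moves that floods $T$ to colour $d$. The starting point is that, since $T$ is a tree and both $T[A]$ and $T[B]$ are connected, there is exactly one edge $ab$ of $T$ with $a\in A$ and $b\in B$. The key structural tool I would establish first is that moves confined to one side behave identically whether they are played in $T$ or in the induced subtree: for $v\in A$ and any colouring $\sigma$ we have $\comp_T(v,\sigma)\cap A=\comp_{T[A]}(v,\sigma|_A)$, because in a tree the unique path between two vertices of the connected set $A$ stays inside $A$, so a monochromatic $T$-path joining two vertices of $A$ is already a monochromatic path in $T[A]$. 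It follows by induction on the moves that playing any sequence all of whose moves lie in $A$ affects the colours of $A$ in exactly the same way in $T$ as in $T[A]$ (and symmetrically for $B$); in particular an optimal sequence for $T[A]$ still floods $A$ to $d$ when replayed in $T$.

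Granting this tool, I would fix optimal sequences $S_A$ and $S_B$ flooding $A$ and $B$ to $d$ in $T[A]$ and $T[B]$ respectively, and play $S_B$ followed by $S_A$ in $T$. By the tool above, after $S_B$ every vertex of $B$ has colour $d$; I would then argue that $A$ can be brought to colour $d$ using no more moves than $S_A$, after which, since $a$ and $b$ are adjacent and both coloured $d$, the whole of $T$ is monochromatic in $d$. This would give the required bound on the total number of moves.

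The hard part will be the interference across the edge $ab$: although the moves of $S_B$ all lie in $B$, such a move can flood across $ab$ and recolour part of $A$, so after $S_B$ the colouring of $A$ is some $\alpha\neq\omega|_A$, and there is no reason a priori that $S_A$ should flood $\alpha$ to $d$. To control this I would first observe that the cumulative effect of $S_B$ on $A$ is exactly that of a sequence of moves played at the single vertex $a$; since after $S_B$ the vertex $b$ has colour $d$ (and recolouring $b$ at a later stage would also recolour $a$, producing a further crossing), the last such move must use colour $d$, so the net effect is that some connected region $R_A\ni a$ has been flooded to $d$. The crux is then a monotonicity statement: flooding $R_A$ to $d$ does not increase the number of moves needed to flood $A$ to $d$. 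This is false for an arbitrary region — recolouring the centre of a monochromatic red path to $d$ turns one flood into two — so the argument must exploit the special structure of $R_A$, namely that a component grown from $a$ can never cut a monochromatic component of $\omega|_A$ in two (if it contained one vertex of such a component it would, at the moment of absorption, contain all of it), whence $R_A$ is a connected union of whole monochromatic components of $\omega|_A$. For such a region, flooding it to $d$ can be realised by a succession of moves each of colour $d$, and since a single move to the target colour never increases the relevant flood number, one obtains $m_{T[A]}(A,\alpha,d)\le m_{T[A]}(A,\omega|_A,d)$. Finally I would check, by the same "the last crossing move must be coloured $d$" argument applied in the other direction, that replaying the flood of $A$ does not undo the flooding of $B$.
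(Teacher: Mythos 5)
Your structural groundwork is sound: there is indeed a unique crossing edge $ab$, moves confined to $A$ do act identically in $T$ and in $T[A]$, and your observation that the net effect of $S_B$ on $A$ is to recolour to $d$ a connected union $R_A$ of whole monochromatic components of $\omega|_A$ containing $a$ (via the ``last crossing move must have colour $d$'' argument) is correct and nicely argued. The problem is that the entire proof then bottoms out in the assertion that ``a single move to the target colour never increases the relevant flood number,'' i.e.\ that $m_{T[A]}(A,\mu(\omega|_A),d)\le m_{T[A]}(A,\omega|_A,d)$ for any legal move $\mu=(u,d)$. You state this as if it were known or obvious, but it is neither: it is nowhere in the paper, and it is a substantive monotonicity claim about the game. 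The natural proof attempts fail. If you try induction on $m_{T[A]}(A,\omega|_A,d)$ and attempt to commute $\mu$ past the first move $(w,c)$ of an optimal sequence, the recoloured component $\comp(u,\omega|_A)$ may have merged with neighbouring $d$-components, so that playing $(w,c)$ in the modified colouring recolours a strictly larger region than it did originally; the resulting pair of colourings is no longer related by a single move of colour $d$, and the induction does not close. You would need to invent and prove a more flexible invariant (some notion of one colouring ``dominating'' another) that is preserved under arbitrary moves, and controlling that invariant runs into exactly the same interference-between-merged-components difficulties that make the lemma itself hard. In short, the step you label as ``the crux'' is the crux, and it is not proved.

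For comparison, the paper avoids any appeal to monotonicity. It proves a stronger statement by induction on $|B|$: if $S_A$ floods $A$ with colour $d$, $S_B$ floods $B$ with colour $d'$, and there is \emph{no} compatible ordering of the two sequences (neither can be played first without disturbing the other side), then $m_T(T,\omega,d')\le|S_A|+|S_B|$. The interference is handled by decomposing $B$ into the monochromatic components present just before $B$ first becomes monochromatic and absorbing them one at a time via the inductive hypothesis, together with a separate case in which a proper prefix $\tilde{S_B}$ of $S_B$ already links some vertex of $A$ to all of $B$, so that $\tilde{S_B}S_A$ floods $T$ outright. If you want to salvage your approach, you must either supply a complete proof of your monotonicity claim (for colourings of trees obtained by moves of the target colour) or restructure the argument along these lines; as written, the proof is incomplete.
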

\begin{proof}
First observe that the result is trivially true if either $A$ or $B$ is empty, so assume $A,B \neq \emptyset$.  We say that a sequence of moves $S$ played in a graph $G$ with colouring $\omega$ \emph{affects} the subset $X$ of $V(G)$ if playing $S$ in $G$ (with colouring $\omega$) changes the colour of at least one vertex in $X$ at some stage.  If $X$ and $Y$ are disjoint subsets of $G$, and $S_X$ and $S_Y$ are sequences of moves played in $X$ and $Y$ respectively, we say there is a \emph{compatible ordering} of $(X,S_X)$ and $(Y,S_Y)$ if at least one of these sequences does not affect the other subset (i.e.~either playing $S_X$ in $G$ does not change the colour of any vertex of $Y$, or else playing $S_Y$ in $G$ does not change the colour of any vertex in $X$).

In fact we prove the following, stronger claim.
\begin{claim}
Let $A$, $B$ and $T$ be as above, and suppose the sequence $S_A$ floods $A$ with colour $d$, and that $S_B$ floods $B$ with some colour $d' \in C$.  Then, if there is no compatible ordering of $(A,S_A)$ and $(B,S_B)$, we have 
$$m_T(T,\omega,d') \leq |S_A| + |S_B|.$$
\end{claim}

To see that the claim implies the result, we set $d = d'$, and choose $S_A$ and $S_B$ to be optimal sequences to flood $T[A]$ and $T[B]$ respectively with colour $d$.  If there is no compatible ordering of $(A,S_A)$ and $(B,S_B)$, the result then follows immediately from the claim (as $|S_A| + |S_B| = m_{T[A]}(A,\omega,d) + m_{T[B]}(B,\omega,d)$), so it suffices to consider the case in which there does exist a compatible ordering of $(A,S_A)$ and $(B,S_B)$.  Without loss of generality, we may assume that playing $S_A$ in $T$ does not change the colour of any vertex in $B$.

Note that if $T'$ is a subtree of the tree $T$, and $S$ is a sequence of moves played in $T'$, then $S$ has exactly the same effect on $T'$ whether the moves are played in the isolated subtree $T'$ or in the larger tree $T$, as no vertices in $T'$ can be linked in $T$ before they are linked in $T'$.  Thus playing $S_A$ in $T$ links $A$ in colour $d$ and (by assumption) does not change the colour of any vertex in $B$, so then playing $S_B$ in the resulting coloured tree gives $B$ colour $d$.  If this also changes the colour of $A$, some vertex $b \in B$ is linked to $A$ and so $A$ must thereafter have the same colour as $b$, which ends up with colour $d$.  Thus the sequence $S_AS_B$ gives all of $T$ colour $d$, implying
$$m_T(T,\omega,d) \leq |S_A| + |S_B| = m_{T[A]}(A,\omega|_A,d) + m_{T[B]}(B,\omega|_B,d),$$
as required.

We now prove the claim by induction on $|B|$.  Note that we may assume without loss of generality that $\omega|_B$ is a proper colouring of $B$, otherwise we may contract monochromatic components.  Suppose $|B| = 1$.  If there is no compatible ordering, then $S_A$ must change the colour of the only vertex in $B$ (linking it to some $a \in A$), and so playing $S_A$ in $T$ makes the whole tree monochromatic with colour $d$.  Thus $m_T(T,\omega,d) \leq |S_A|$, and
$$m_T(T,\omega,d') \leq m_T(T,\omega,d) + 1 \leq |S_A| + 1.$$
But the fact there is no compatible ordering also implies that $|S_B| \geq 1$, otherwise playing the empty sequence $S_B$ in $T$ would not affect $A$.  Hence
$$m_T(T,\omega,d') \leq |S_A| + 1 \leq |S_A| + |S_B|,$$
as required.

Now suppose $|B| > 1$, so $B$ is not monochromatic initially, and assume that the claim holds for smaller $B$.  By the reasoning above, this implies that the theorem holds in these cases: in other words, whenever $\bar{T}$ is a tree and $\bar{A}$ and $\bar{B}$ are disjoint subsets of $V(\bar{T})$ such that $V(\bar{T}) = \bar{A} \cup \bar{B}$, $\bar{T}[\bar{A}]$ and $\bar{T}[\bar{B}]$ are connected, and $|\bar{B}| < |B|$, for any $\bar{d} \in C$ we have 
\begin{equation}
m_{\bar{T}}(\bar{T},\omega|_{\bar{T}}, \bar{d}) \leq m_{\bar{T}[\bar{A}]}(\bar{A}, \omega|_{\bar{A}}, \bar{d}) + m_{\bar{T}[\bar{B}]}(\bar{B},\omega|_{\bar{B}},\bar{d}).
\label{ind-eqn}
\end{equation}

Set ${S_B}'$ to be the initial segment of $S_B$, up to and including the move that first makes $B$ monochromatic (in any colour), so any final moves that simply change the colour of $B$ are omitted.  We may, of course, have ${S_B}' = S_B$, if $B$ is not monochromatic before the final move of $S_B$.

By assumption, there is no compatible ordering of $(A,S_A)$ and $(B,S_B)$.  First suppose that there is no compatible ordering of $(A,S_A)$ and $(B,{S_B}')$ either, and that ${S_B}'$ gives $B$ colour $d'' \in C$ (note that this must be the situation if $S_B = {S_B}'$, in which case $d'' = d'$).  Before the final move of ${S_B}'$ there are $r \geq 2$ monochromatic components in $B$ (all but one of which have colour $d''$), with vertex-sets $B_1, \ldots, B_r$.  For $1 \leq i \leq r$, set $S_i$ to be the subsequence of ${S_B}'$ consisting of moves played in $B_i$, and note that these subsequences partition ${S_B}'$.  Observe also that playing $S_i$ in $T[B_i]$ gives $B_i$ colour $d''$, so $m_{T[B_i]}(B_i,\omega|_{B_i},d'') \leq |S_i|$.

Let $B_1$ be the unique component adjacent to $A$, and set $T_1 =  T[A \cup B_1]$.  Note that $S_A$ floods $T[A]$ with colour $d$, and $S_1$ floods $T[B_1]$ with colour $d''$.  Moreover, as there is no compatible ordering of $(A,S_A)$ and $(B,{S_B}')$, there cannot be a compatible ordering of $(A,S_A)$ and $(B_1,S_1)$: any move from ${S_B}'$ that changes the colour of a vertex in $A$ must belong to $S_1$, and if $S_A$ changes the colour of any vertex in $B$ it must change the colour of at least one vertex in $B_1$.  Thus we can apply the inductive hypothesis to see that 
$$m_{T_1}(T_1, \omega|_{T_1}, d'') \leq |S_A| + |S_1|.$$
Now suppose without loss of generality that $B_2$ is adjacent to $B_1$.  We can then apply (\ref{ind-eqn}) to $T_2 = T[V(T_1) \cup B_2]$ to see that
$$m_{T_2}(T_2, \omega|_{T_2}, d'') \leq m_{T_1}(T_1,\omega|_{T_1},d'') + m_{T[B_2]}(B_2,\omega|_{B_2},d'') \leq |S_A| + |S_1| + |S_2|.$$
Continuing in this way, each time adding an adjacent component, we see that
$$m_T(T,\omega,d'') \leq |S_A| + \sum_{i=1}^r |S_i| = |S_A| + |{S_B}'|.$$
Now, if ${S_B}' = S_B$, this immediately gives the desired result, as $d'' = d'$.  Otherwise, note that $|S_B| \geq |{S_B}'|+1$ and so
$$m_T(T,\omega,d') \leq m_T(T,\omega,d'') + 1 \leq |S_A| + |{S_B}'| + 1 \leq |S_A| + |S_B|,$$
as required.

It remains to consider the case in which there exists a compatible ordering of $(A,S_A)$ and $(B,{S_B}')$.  By assumption, there is no compatible ordering of $(A,S_A)$ and $(B,S_B)$, so playing $S_A$ in $T$ must change the colour of some vertex in $B$. Thus it must be that playing ${S_B}'$ in $T$ does not change the colour of any vertex in $A$, although playing $S_B$ does.  Let $\alpha$ be the first move of $S_B$ which, when the sequence is played in $T$ with initial colouring $\omega$, changes the colour of some vertex in $A$.  Set $\tilde{S_B}$ to be the initial segment of $S_B$ up to but not including $\alpha$ (so ${S_B}'$ and $\tilde{S_B}$ are both proper initial segments of $S_B$, with $\tilde{S_B}$ possibly longer than ${S_B}'$).  Note that $\tilde{S_B}$ makes $B$ monochromatic, and observe also that, as $\alpha$ is played in $B$ but changes the colour of some vertex $a \in A$, $\tilde{S_B}$ must link $a$ and $B$.  

Now suppose we play $\tilde{S_B}$ in $T$, followed by $S_A$.  By definition, $\tilde{S_B}$ does not change the colour of any vertex in $A$, so playing $S_A$ in $T$ with colouring $\tilde{S_B}(\omega,T)$ will make $A$ monochromatic with colour $d$.  However, after playing $\tilde{S_B}$, every vertex in $B$ belongs to the same monochromatic component as $a \in A$, and so the sequence $\tilde{S_B}S_A$ must also give every vertex in $B$ colour $d$.  Thus we see that $m_T(T,\omega,d) \leq |\tilde{S_B}| + |S_A|$, and so
$$m_T(T,\omega,d') \leq m_T(T,\omega,d) + 1 \leq |S_A| + |\tilde{S_B}| + 1 \leq |S_A| + |S_B|,$$
completing the proof of the claim.
\end{proof}

This result is easily applied to give a lower bound on $m_G(G,\omega,d)$.

\begin{lma}
Let $G$ be a connected graph with colouring $\omega$ from colour-set $C$.  Then, for any $d \in C$, 
$$\min_{T \in \mathcal{T}(G)} m_T(T,\omega,d) \leq m_G(G,\omega,d).$$
\label{T<=G}
\end{lma}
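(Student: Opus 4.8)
The plan is to start from an optimal sequence $S$ that floods $G$ to colour $d$, so that $|S| = m_G(G,\omega,d)$, and to build, in tandem with the play of $S$ on $G$, a single spanning tree $T$ of $G$ on which the very same sequence $S$ still achieves a flood. Since the left-hand side of the claim is a minimum over all spanning trees, producing one such $T$ immediately gives $\min_{T \in \mathcal{T}(G)} m_T(T,\omega,d) \le m_T(T,\omega,d) \le |S| = m_G(G,\omega,d)$.

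For the construction I would maintain a spanning forest $F$ of $G$ as the moves of $S$ are played out. Initially I pick, inside each monochromatic component of $(G,\omega)$ (each of which is connected), a spanning tree, and let $F$ be their disjoint union. I then replay $S$ move by move: a move $(v,d_j)$ recolours $\comp_G(v,\omega')$ (for the current colouring $\omega'$) to $d_j$, after which this set merges with the adjacent components $D_1,\dots,D_k$ that already carry colour $d_j$, and for each such $D_i$ I add to $F$ a single edge of $G$ joining $\comp_G(v,\omega')$ to $D_i$ (one exists precisely because the two are adjacent). At the end I set $T = F$.

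The heart of the argument — and the step I expect to be the main obstacle — is the invariant that, throughout the play of $S$, every monochromatic component of the current colouring of $G$ induces a connected subtree of $F$, and that $F$ stays acyclic. I would prove both by induction on the moves: the sets $\comp_G(v,\omega'),D_1,\dots,D_k$ are distinct trees of the current $F$, and the $D_i$ are pairwise non-adjacent except through $\comp_G(v,\omega')$ (two adjacent same-coloured components would already be one), so adding one edge per $D_i$ joins $k+1$ trees with $k$ edges, preserving acyclicity while reconnecting the merged component. A short edge count shows $F$ gains exactly $(\#\text{initial components}) - 1$ edges overall, so $T$ finishes with $|V(G)|-1$ edges and, being connected once the final colouring is monochromatic, is a genuine spanning tree.

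Finally I would check that $S$ floods $T$ in the same number of moves. A simultaneous induction on the moves shows that the colourings of $G$ and $T$ on $V(G)$ agree at every stage: since $T \subseteq G$ we always have $\comp_T(v) \subseteq \comp_G(v)$, while the invariant forces $\comp_G(v)$ to be connected and monochromatic in $T$, giving the reverse inclusion, so each move recolours identical vertex sets in $T$ and in $G$. Hence $T$ becomes monochromatic in colour $d$ once all of $S$ is played, as required. (An alternative route, matching the hint before the statement, is induction on $|V(G)|$: delete a non-cut vertex $w$, apply the inductive hypothesis to $G-w$ to get a spanning tree $T'$, reattach $w$ by one edge, and bound $m_T(T,\omega,d)$ via Lemma~\ref{tree-splitting} with the partition $\{w\},\,V(G)\setminus\{w\}$; there the obstacle is to read off $w$ from $S$ so that flooding $G-w$ costs at most $|S|-[\omega(w)\neq d]$ moves.)
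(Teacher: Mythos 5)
Your construction is correct, and it takes a genuinely different route from the paper. The paper proves this lemma by induction on $m_G(G,\omega,d)$: it looks at the \emph{last} move of an optimal sequence $S$, recursively obtains spanning trees $T_1,\dots,T_r$ for the monochromatic components $A_1,\dots,A_r$ present just before that move, joins them by one edge from $A_1$ to each $A_i$, and then invokes Lemma~\ref{tree-splitting} to bound $m_T(T,\omega,d)$ by $\sum_i |S_i|$; the flooding sequence for the resulting tree is therefore only guaranteed to exist, not to be $S$ itself. You instead build the tree \emph{forwards}, tracking the merge history of monochromatic components and adding one edge per merge, and your two invariants do hold: only the recoloured component acquires new same-colour neighbours, those neighbours are pairwise in distinct trees of the current forest $F$ (two adjacent same-coloured components would already have been one), so the $k$ new edges join $k+1$ disjoint trees in a star pattern and preserve both acyclicity and the property that each monochromatic component of $G$ spans a subtree of $F$; the edge count then certifies that the final $F$ is a spanning tree. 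Your final step is also sound: $\comp_T(v)\subseteq\comp_G(v)$ because $T\subseteq G$, and the invariant (applied to $F_i\subseteq T$) gives the reverse inclusion, so the colourings of $G$ and $T$ agree after every prefix of $S$. What your approach buys is a self-contained argument that avoids Lemma~\ref{tree-splitting} entirely and yields the slightly stronger conclusion that the \emph{same} optimal sequence $S$ floods the constructed spanning tree; what the paper's recursive decomposition buys is that it reuses machinery (Lemma~\ref{tree-splitting}) needed anyway for the reverse inequality and for Lemma~\ref{add-e}. Your parenthetical alternative (deleting a non-cut vertex) is rightly flagged as problematic and is not needed.
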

\begin{proof}
It suffices to show that there exists a spanning tree $T$ for $G$ such that $m_T(T,\omega,d) \leq m_G(G,\omega,d)$.  We prove this by induction on $m_G(G,\omega,d)$.  

Note that the base case, for $m_G(G,\omega,d) = 0$, is trivial: any spanning tree will do, as all spanning trees are already monochromatic with colour $d$.  So assume $m_G(G,\omega,d) \geq 1$ and let $S$ be an optimal sequence to flood $G$ with colour $d$.

First suppose that $G$ is monochromatic in some colour $d'$ before the final move of $S$, which simply changes the graph's colour to $d$, so $|S| = 1 + m_G(G,\omega,d')$.  Then, by the inductive hypothesis, there exists a spanning tree $T$ for $G$ such that $m_T(T,\omega,d') \leq m_G(G,\omega,d')$.  But we then see that
$$m_T(T,\omega,d) \leq 1 + m_T(T,\omega,d') \leq 1 + m_G(G,\omega,d') = |S| = m_G(G,\omega,d),$$
and so $T$ is the spanning tree we require.

Thus we may assume that, immediately before the final move of $S$, $G$ has $r \geq 2$ monochromatic components with vertex sets $A_1, \ldots, A_r$ (all but one of which have colour $d$), and that the final move is played in $A_1$.  For $1 \leq i \leq r$, let $S_i$ be the subsequence of $S$ consisting of moves played in $A_i$; note that these sequences partition $S$, and that $m_{G[A_i]}(A_i,\omega|_{A_i},d) \leq |S_i|$ for each $i$ (as $S_i$, played in $G[A_i]$, makes this subgraph monochromatic with colour $d$).

For $2 \leq i \leq r$, we know that $|S_i| < |S|$ (as at least one move is played in $A_1$), so we may apply the inductive hypothesis immediately to see that there exists a spanning tree $T_i$ for $G[A_i]$ such that $m_{T_i}(T_i,\omega|_{A_i},d) \leq m_{G[A_i]}(A_i,\omega|_{A_i},d) \leq |S_i|$.  Now observe that, if $S_1'$ is the subsequence $S_1$ with just the final move omitted, $S_1'$ makes $A_1$ monochromatic with some colour $d'$, and $|S_1'| < |S|$.  Thus we can apply the inductive hypothesis here to see that there exists a spanning tree $T_1$ for $G[A_1]$ such that $m_{T_1}(T_1,\omega|_{A_1},d') \leq m_{G[A_1]}(A_1,\omega|_{A_1},d') \leq |S_1'|$ and so $m_{T_1}(T_1,\omega|_{A_1},d) \leq 1 + m_{T_1}(T_1,\omega|_{A_1},d') \leq 1 + |S_1'| = |S_1|$.

Note that, for $2 \leq i \leq r$, $A_1$ must be adjacent to $A_i$, and so there exists an edge $e_i$ in $G$ between $A_1$ and $A_i$.  Define $T$ to be $\bigcup_{i=1}^r T_i \cup \{e_i: 2 \leq i \leq r\}$, and observe that $T$ is a spanning tree for $G$.  Moreover, by Lemma \ref{tree-splitting} we see that
\begin{align*}
m_T(T,\omega,d) & \leq \sum_{i=1}^r m_{T_i}(T_i,\omega|_{A_i},d) \\
			    & \leq \sum_{i=1}^r |S_i| \\
			    & = |S| \\
			    & = m_G(G,\omega,d),
\end{align*}
so $T$ is as required.
\end{proof}

We now proceed to show the reverse inequality.  We call a spanning tree $T$ of $G$ \emph{$d$-minimal} (with respect to the colouring $\omega$) if $m_{T}(T,\omega,d) = \min_{T' \in \mathcal{T}(G)} m_{T'}(T',\omega,d)$, and say that a spanning tree $T$ \emph{preserves monochromatic components of $G$} (with respect to $\omega$) if $T$ and $G$ have the same monochromatic components, i.e.~$\comp_G(v,\omega) = \comp_T(v,\omega)$ for all $v \in V(G)$.  

We shall demonstrate that, for any $d \in C$, there exists a $d$-minimal spanning tree $T$ that preserves the monochromatic components of $G$, and that for such a tree we must have $m_T(T,\omega,d) \geq m_G(G,\omega,d)$.  Our first step is to show that, given any tree $T$ and an edge $e \notin E(T)$, we can replace $T$ with another tree $T'$ that contains $e$, without increasing the number of moves we need to flood the tree.

\begin{lma}
Let $T$ be a tree with colouring $\omega$ from colour-set $C$, and suppose $e=uv \notin E(T)$.  Then, for any $d \in C$, there exists a spanning tree $T'$ of $T \cup \{e\}$, with $e \in E(T')$, such that $m_{T'}(T',\omega,d) \leq m_T(T,\omega,d)$.
\label{add-e}
\end{lma}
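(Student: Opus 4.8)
The plan is to mirror the inductive structure of the proof of Lemma~\ref{T<=G}, inducting on $m_T(T,\omega,d)$. First observe that $T \cup \{e\}$ contains a unique cycle, namely $e$ together with the unique path $P$ joining $u$ and $v$ in $T$; the spanning trees of $T \cup \{e\}$ that contain $e$ are exactly those of the form $T - f + e$ for some edge $f$ of $P$. Thus the whole task reduces to choosing the right edge $f$ of $P$ to delete. I would fix an optimal sequence $S$ flooding $T$ with colour $d$ and examine its final move, splitting into the same two cases as in Lemma~\ref{T<=G}. The base case $m_T(T,\omega,d)=0$ is immediate, since deleting any edge of $P$ from an already-flooded tree leaves it flooded.

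In the first case $T$ is monochromatic in some colour $d'$ immediately before the final move (which merely recolours it $d$), so $|S| = 1 + m_T(T,\omega,d')$. Here I apply the inductive hypothesis to the instance $(T,e,d')$, which has strictly smaller flooding number, to obtain a spanning tree $T'$ of $T \cup \{e\}$ with $e \in E(T')$ and $m_{T'}(T',\omega,d') \le m_T(T,\omega,d')$; one extra move then gives $m_{T'}(T',\omega,d) \le 1 + m_{T'}(T',\omega,d') \le m_T(T,\omega,d)$, as required.

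In the second case, just before the final move $T$ has components $A_1,\dots,A_r$ with $r \ge 2$, the final move is played in $A_1$, and (exactly as in Lemma~\ref{T<=G}, using that monochromatic components only ever grow, so distinct $A_i$ never interact during $S$) the subsequences $S_i$ of moves played in each $A_i$ satisfy $m_{T[A_i]}(A_i,\omega|_{A_i},d) \le |S_i|$, while $A_1$ is joined to every other $A_j$ by a unique edge. I would then locate the endpoints, say $u \in A_a$ and $v \in A_b$. If $a \ne b$, at least one of these components differs from $A_1$, say $A_b \ne A_1$, and $P$ crosses the unique edge $f$ joining $A_1$ to $A_b$ (the component quotient is a star centred at $A_1$); I delete this $f$. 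The split of $T'=T-f+e$ into $A_b$ and the remaining tree is then handled directly by Lemma~\ref{tree-splitting}: $S_b$ floods $A_b$ with $d$, and the subsequence of moves played off $A_b$ (which includes the final move, played in $A_1$) floods the remaining tree with $d$, so $m_{T'}(T',\omega,d) \le |S|$. If instead $a = b$, then $P$ lies entirely inside the single component $A_a$, and I recurse: apply the inductive hypothesis to $(T[A_a],e)$ to obtain a tree on $A_a$ containing $e$, keep the other $T[A_i]$ unchanged, and reconnect everything through the unique bridges, combining the bounds by iterated applications of Lemma~\ref{tree-splitting}.

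The main obstacle, and the reason the choice of $f$ matters, is that restricting $S$ to an arbitrary connected subset does \emph{not} in general flood that subset with colour $d$: once the subset has merged with its complement, a move played outside it can do essential work inside it. The trick is therefore to choose $f$ so that the two sides of the split align with the monochromatic structure of $S$ --- either making one side exactly a pre-final-move component $A_b$ (where, because components never split, the relevant subsequences genuinely flood each side with $d$) or, when $u$ and $v$ share a component, dropping down to a strictly smaller instance and recursing. Verifying that the ``remaining tree'' subsequence really does flood with $d$ in the case $a \ne b$, and that the pieces glue via Lemma~\ref{tree-splitting} into a genuine spanning tree of $T \cup \{e\}$ containing $e$, is the step I would expect to need the most care.
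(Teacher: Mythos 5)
Your proposal is correct and takes essentially the same route as the paper: induction on $m_T(T,\omega,d)$, a case split on the final move of an optimal sequence, and recombination via Lemma~\ref{tree-splitting} (the paper swaps the $A_1$--$A_2$ bridge for $e$ and iterates the lemma over all $r$ components, where you make a single two-way split, but this is cosmetic). The only detail you leave implicit is that when $u,v$ both lie in $A_1$ you must apply the inductive hypothesis to $T[A_1]$ with the colour $d'$ it has before its last move (to get the strict decrease in the induction parameter), which is exactly the manoeuvre you already use in your first case.
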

\begin{proof}
We proceed by induction on $m_T(T,\omega,d)$.  If $m_T(T,\omega,d) = 0$ then $T$ is already monochromatic with colour $d$, and we can choose $T'$ to be any spanning tree of $T \cup \{e\}$ with $e \in E(T')$.  So assume $m_T(T,\omega,d) > 0$, and let $S$ be an optimal sequence to flood $T$ with colour $d$.

First suppose that $T$ is monochromatic in some colour $d'$ before the final move of $S$, and so this last move just changes the colour to $d$.  We can apply the inductive hypothesis to see that there exists a spanning tree $T'$ of $T \cup \{e\}$, with $e \in E(T')$, such that $m_{T'}(T',\omega,d') \leq m_T(T,\omega,d')$.  But then we have
$$m_{T'}(T',\omega,d) \leq 1 + m_{T'}(T',\omega,d') \leq 1 + m_T(T,\omega,d') = m_T(T,\omega,d).$$

Thus we may assume that the last move links $r \geq 2$ monochromatic subtrees of $T$ with vertex sets $A_1, \ldots, A_r$, and is played in $A_1$.  Note that for each $i$ the subsequence $S_i$ of $S$, consisting of moves played in $A_i$, floods the subtree $T[A_i]$ with colour $d$, and so $m_{T[A_i]}(A_i,\omega|_{A_i},d) \leq |S_i|$.  Observe also that $S_1, \ldots, S_r$ partition $S$.

Suppose first that $u,v \in A_i$ for some $i$.  As at least one move is played in $T_1$, we have $m_{T[A_j]}(A_j,\omega|_{T_j},d) < |S| = m_T(T,\omega,d)$ for $2 \leq j \leq r$, so if $i \neq 1$ we may apply the inductive hypothesis to see that there exists a spanning tree $T_i$ for $T[A_i] \cup \{e\}$, with $e \in E(T_i)$, such that $m_{T_i}(T_i, \omega|_{A_i},d) \leq m_{T[A_i]}(A_i, \omega|_{A_i},d) \leq |S_i|$.

For the case $i=1$, observe that $A_1$ is monochromatic in some colour $d' \neq d$ before the final move of $S_1$, so $m_{T[A_1]}(A_1,\omega|_{A_1},d') < |S_1| \leq |S| = m_T(T,\omega,d)$.  Thus, by the inductive hypothesis, there again exists a spanning tree $T_1$ for $T[A_1] \cup \{e\}$, with $e \in E(T_1)$, such that $m_{T_1}(T_1,\omega|_{A_1},d') \leq m_{T[A_1]}(A_1,\omega|_{A_1},d')$, implying 
$$m_{T_1}(T_1,\omega|_{T_1},d) \leq 1 + m_{T_1}(T_1,\omega|_{T_1},d') \leq 1 + m_{T[A_1]}(A_1,\omega|_{A_1},d') \leq |S_1|.$$

Now let us define $T'$ to be the tree with vertex-set $V(T)$ and edge-set $(E(T) \setminus E(T[A_i])) \cup E(T_i)$.  $T'$ clearly contains $e$, and by Lemma \ref{tree-splitting} we see that
\begin{align*}
m_{T'}(T',\omega,d) & \leq  m_{T_i}(T_i,\omega|_{T_i},d) + \sum_{j \neq i} m_{T[A_j]}(A_j,\omega|_{A_j},d) \\
                    & \leq \sum_{i = 1}^r |S_i| \\
                    & = |S| \\
                    & = m_T(T,\omega,d),
\end{align*}
so $T'$ has the required properties.

It remains to consider the case that the endpoints of $e$ do not lie in the same subtree.  In particular, at most one of $u$ and $v$ lies in $A_1$, so suppose without loss of generality that $v \in A_2$.  Suppose $f$ is the unique edge in $T$ that joins $A_1$ to $A_2$, and set $T'$ to be the tree with vertex-set $V(T)$ and edge-set $(E(T) \setminus \{f\}) \cup \{e\}$.  Once again, it is clear that $T'$ is a tree containing $e$, and we can apply Lemma \ref{tree-splitting} to see that 
\begin{align*}
m_{T'}(T',\omega,d) & \leq \sum_{i=1}^r m_{T[A_i]}(A_i,\omega|_{A_i},d) \\
                    & \leq \sum_{i=1}^r |S_i| \\
                    & = |S| \\
                    & = m_T(T,\omega,d),
\end{align*}
completing the proof.
\end{proof}

Next we show that every coloured graph has a $d$-minimal spanning tree that preserves monochromatic components. 

\begin{lma}
Let $G=(V,E)$ be a connected graph with colouring $\omega$ from colour-set $C$.  Then, for any $d \in C$, there exists a $d$-minimal spanning tree for $G$ that preserves monochromatic components of $G$ with respect to $\omega$.
\label{tree-preserves-cmpts}
\end{lma}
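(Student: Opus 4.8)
The plan is to prove the lemma by induction on $|V(G)|$, reducing to a graph with strictly fewer vertices by contracting a monochromatic edge. If $\omega$ assigns distinct colours to the two endpoints of every edge of $G$, then every monochromatic component of $G$ is a single vertex; the same is then true of every spanning tree, so every spanning tree of $G$ preserves monochromatic components, and any $d$-minimal spanning tree (one exists since $\mathcal{T}(G)$ is finite and non-empty) has the required properties. Otherwise there is an edge $h = xy \in E(G)$ with $\omega(x) = \omega(y)$; let $G' = G/h$ be the graph obtained by contracting $h$, with the inherited colouring $\omega'$ (the contracted vertex receiving the common colour of $x$ and $y$). Then $G'$ is connected with $|V(G')| < |V(G)|$, so by the inductive hypothesis $G'$ has a $d$-minimal spanning tree $T'$ that preserves its monochromatic components. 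I would then lift $T'$ to a spanning tree $T$ of $G$ by re-expanding the contracted vertex, distributing its incident tree-edges back to $x$ or $y$ as appropriate and adding the edge $h$, and argue that $T$ is the spanning tree we want.

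Two facts drive the argument. The first is a contraction-invariance principle for a single monochromatic edge: if a tree $R$ contains an edge $h=xy$ with $\omega(x)=\omega(y)$, then $m_R(R,\omega,d) = m_{R/h}(R/h,\omega',d)$. This holds because $x$ and $y$, being adjacent and initially equicoloured, lie in a common monochromatic component at every stage of any sequence of moves; consequently move sequences on $R$ and on $R/h$ correspond bijectively and with equal length. Applying this to $T$ (which contains $h$) gives $m_T(T,\omega,d) = m_{T'}(T',\omega',d)$, and the same correspondence of monochromatic components shows that, since $T'$ preserves the monochromatic components of $G'$, the lift $T$ preserves those of $G$: the component containing $x,y$ re-expands to a connected subtree spanning the corresponding monochromatic component of $G$, while all other components and the (necessarily non-monochromatic) edges between distinct components are unaffected. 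The second fact is that contraction does not change the minimum flood number over spanning trees, i.e. $\min_{T' \in \mathcal{T}(G')} m_{T'}(T',\omega',d) = \min_{S \in \mathcal{T}(G)} m_S(S,\omega,d)$; combined with the $d$-minimality of $T'$ for $G'$ this yields $m_T(T,\omega,d) = \min_{S \in \mathcal{T}(G)} m_S(S,\omega,d)$, so that $T$ is $d$-minimal for $G$, completing the step.

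The main obstacle is proving the minimum-preserving identity above, and in particular the inequality $\min_{T' \in \mathcal{T}(G')} m_{T'}(T',\omega',d) \le \min_{S \in \mathcal{T}(G)} m_S(S,\omega,d)$. The reverse inequality is immediate, since lifting any spanning tree of $G'$ (as above) produces a spanning tree of $G$ containing $h$ with the same flood number. For the forward inequality the difficulty is that a $d$-minimal spanning tree $S$ of $G$ need not contain the edge $h$, so we cannot simply contract it; this is exactly where Lemma \ref{add-e} is needed. Applying Lemma \ref{add-e} to $S$ and the edge $h \notin E(S)$ produces a spanning tree $\tilde{S}$ of $S \cup \{h\}$ with $h \in E(\tilde{S})$ and $m_{\tilde{S}}(\tilde{S},\omega,d) \le m_S(S,\omega,d)$; since $\tilde{S}$ is a spanning tree of $G$ containing $h$, contracting $h$ and invoking the contraction-invariance principle gives a spanning tree $\tilde{S}/h$ of $G'$ with $m_{\tilde{S}/h}(\tilde{S}/h,\omega',d) = m_{\tilde{S}}(\tilde{S},\omega,d) \le m_S(S,\omega,d)$, which is the required bound. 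The remaining verifications --- that the lift of a spanning tree is again a spanning tree, and that the various contractions and expansions behave as claimed --- are routine, so I expect the genuine content of the proof to be concentrated in the single-edge contraction-invariance principle and the invocation of Lemma \ref{add-e}.
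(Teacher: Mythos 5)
Your proposal is correct and follows essentially the same route as the paper: induction via contraction of a monochromatic edge, with Lemma \ref{add-e} used to produce a $d$-minimal spanning tree containing that edge so that the minimum flood number over spanning trees is preserved under contraction. The only cosmetic differences are that you induct on $|V|$ rather than $|E|$ and that you spell out the single-edge contraction-invariance of the flood number, which the paper asserts without detailed justification.
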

\begin{proof}
We proceed by induction on $|E|$.  If $|E| = |V| - 1$ then $G$ is a tree and the result is trivially true, so suppose $|E| \geq |V|$.    If $\omega$ is a proper colouring of $G$ then any spanning tree preserves its monochromatic components, so we may assume that there exists an edge $e \in E$ such that both endpoints of $e$ receive the same colour under $\omega$.

By Lemma \ref{add-e}, there exists a $d$-minimal spanning tree $T_0$ of $G$ that contains $e$.  Let $T_1$ and $G_1$ be the graphs obtained from $T_0$ and $G$ respectively by contracting the edge $e$, and let $\omega_1$ be the corresponding colouring of $V(T_1) = V(G_1)$.  Note that $T_1$ is a spanning tree for $G_1$, and we have $m_{G_1}(G_1, \omega_1, d) = m_G(G,\omega,d)$ and $m_{T_1}(T_1,\omega_1,d) = m_{T_0}(T_0,\omega,d)$.

Since $e(G_1) < e(G)$, we may apply the inductive hypothesis to $G_1$ to find a $d$-minimal spanning tree $T_2$ of $G_1$ that preserves monochromatic components of $G_1$ with respect to $\omega_1$.  Let $T$ be a spanning tree of $G$ obtained from $T_2$ by uncontracting $e$ (note that this tree is not necessarily unique).

It follows immediately that $T$ preserves monochromatic components of $G$ with respect to $\omega$, so it remains to check that $T$ is $d$-minimal.  It is clear that $m_T(T,\omega,d) = m_{T_2}(T_2,\omega_1,d)$, so we see that
\begin{align*}
m_T(T,\omega,d) & = m_{T_2}(T_2,\omega_1,d) \\
				& = \min_{T' \in \mathcal{T}(G_1)}(T',\omega_1,d) \\
                & \leq m_{T_1}(T_1,\omega_1,d) \\
                & = m_{T_0}(T_0,\omega,d) \\
                & = \min_{T' \in \mathcal{T}(G)}(T',\omega,d),
\end{align*}
as required.                
\end{proof}

We are now ready to prove our upper bound on $m_G(G,\omega,d)$.

\begin{lma}
Let $G$ be a connected graph with colouring $\omega$ from colour-set $C$.  Then, for any $d \in C$,
$$m_G(G,\omega,d) \leq \min_{T \in \mathcal{T}(G)} m_T(T,\omega,d).$$
\label{G<=T}
\end{lma}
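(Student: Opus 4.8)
The plan is to combine the existence of a well-behaved spanning tree with an induction on the number of moves. First I would invoke Lemma \ref{tree-preserves-cmpts} to fix a $d$-minimal spanning tree $T$ of $G$ that preserves the monochromatic components of $G$; since $m_T(T,\omega,d)=\min_{T'\in\mathcal{T}(G)}m_{T'}(T',\omega,d)$ for such a tree, it suffices to show that $m_G(G,\omega,d)\le m_T(T,\omega,d)$. In fact I would prove the slightly more general statement that $m_G(G,\omega,d)\le m_T(T,\omega,d)$ holds for \emph{every} spanning tree $T$ of $G$ that preserves monochromatic components, and argue by induction on $m_T(T,\omega,d)$. Fixing an optimal sequence $S$ that floods $T$ with colour $d$, I would split according to the last move of $S$, exactly as in the proofs of Lemmas \ref{T<=G} and \ref{add-e}.

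If $T$ is already monochromatic (in some colour $d'$) immediately before the final move, then the initial segment of $S$ floods $T$ with $d'$ using fewer moves, so by the inductive hypothesis $m_G(G,\omega,d')\le m_T(T,\omega,d')\le |S|-1$, and one extra move gives $m_G(G,\omega,d)\le m_G(G,\omega,d')+1\le |S|$. Otherwise the final move merges $r\ge 2$ monochromatic subtrees with vertex sets $A_1,\dots,A_r$, and $S$ decomposes into subsequences $S_i$ played in the $A_i$. Here I would use that, because each $A_i$ induces a subtree of $T$, the induced subgraph $G[A_i]$ is connected and $T[A_i]$ is a spanning tree of it; moreover $T[A_i]$ preserves the monochromatic components of $G[A_i]$, since the unique $T$-path between two vertices of the subtree $A_i$ stays inside $A_i$. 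The inductive hypothesis (together with a further last-move reduction on the part $A_1$ that is recoloured last, where I would instead apply induction to the colour $d'$ reached just before its final sub-move) then yields $m_{G[A_i]}(A_i,\omega|_{A_i},d)\le |S_i|$ for each $i$.

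It remains to assemble these partial floodings into a flooding of all of $G$ using at most $\sum_i |S_i|=|S|$ moves, and this is the step I expect to be the main obstacle. It is a graph analogue of the tree-splitting bound of Lemma \ref{tree-splitting}: each $A_i$ is adjacent to $A_1$ through an edge of $G$, so once every $G[A_i]$ has been driven to colour $d$ the whole graph is linked in colour $d$. The difficulty is that, unlike in a tree, a sequence flooding one part $G[A_i]$ may, when played inside $G$, change the colours of vertices in another part through a non-tree edge, so the partial sequences cannot simply be concatenated. I would therefore control this interference by a compatible-ordering argument in the spirit of Lemma \ref{tree-splitting}, showing that the partial floodings can be ordered (and, where they interfere, absorbed into one another) so that their total cost still bounds $m_G(G,\omega,d)$ from above; verifying that the monochromatic components of $G$ behave well under this interleaving is where essentially all of the work lies.
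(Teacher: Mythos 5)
Your setup is right---you correctly identify Lemma \ref{tree-preserves-cmpts} as the entry point, and the fact that a component-preserving spanning tree lets a move act identically in $T$ and in $G$ is indeed the heart of the matter---but your induction is aimed at the wrong end of the sequence, and this leaves a genuine gap at exactly the step you flag as ``the main obstacle''. Peeling off the \emph{last} move forces you to prove the graph-level decomposition inequality $m_G(G,\omega,d)\le\sum_i m_{G[A_i]}(A_i,\omega|_{A_i},d)$. In the paper that statement is Corollary \ref{decomposition}, and it is only ever derived \emph{from} Theorem \ref{min-tree}, so invoking it here would be circular; and it cannot be obtained by rerunning the compatible-ordering argument of Lemma \ref{tree-splitting}, because that argument rests on the observation that a sequence played in a subtree $T'$ of a tree has exactly the same effect whether $T'$ is isolated or sits inside $T$ (no vertices of $T'$ can be linked in $T$ before they are linked in $T'$). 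For induced subgraphs of a general graph this fails: two vertices of $A_i$ can become linked through a recoloured path passing through another part, after which the moves of $S_i$ act on larger components than they would in the isolated game, and the bookkeeping of the compatible-ordering induction breaks down. So the assembly step is not a routine adaptation of Lemma \ref{tree-splitting}; it is the whole difficulty, and your proposal does not supply an argument for it.

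The paper's proof avoids any decomposition by peeling off the \emph{first} move $\alpha=(v,d')$ of an optimal sequence for the component-preserving $d$-minimal tree $T'$. Because $\comp_G(v,\omega)=\comp_{T'}(v,\omega)$, this single move has the same effect in $G$ as in $T'$, so $\alpha(\omega,G)=\alpha(\omega,T')$; one then applies the inductive hypothesis---which must be stated for \emph{arbitrary} spanning trees, since $T'$ need not preserve components of the new colouring---to $G$ with colouring $\alpha(\omega,T')$ and spanning tree $T'$, for which $m_{T'}(T',\alpha(\omega,T'),d)=m_{T'}(T',\omega,d)-1$. There is no splitting into parts and hence no interference to control. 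If you want to keep your last-move structure you would have to prove Corollary \ref{decomposition} from scratch; the first-move reduction is what makes the induction close.
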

\begin{proof}
It suffices to prove that, for any $T \in \mathcal{T}(G)$, $m_G(G,\omega,d) \leq m_T(T,\omega,d)$.  We proceed by induction on $m_T(T,\omega,d)$.  If $m_T(T,\omega,d) = 0$ the result is trivially true, as $G$ must already be monochromatic with colour $d$, so assume $m_T(T,\omega,d)>0$. By Lemma \ref{tree-preserves-cmpts}, there exists a $d$-minimal spanning tree $T'$ of $G$ that preserves the monochromatic components of $G$ with respect to $\omega$.  Note that $m_T(T,\omega,d) \geq m_{T'}(T',\omega,d)$.

Let $S$ be an optimal sequence to flood the tree $T'$ (considered in isolation), and let $\alpha = (v,d')$ be the first move of $S$.  Note that $m_{T'}(T',\alpha(\omega,T'),d) < m_T(T,\omega,d)$.  By choice of $T'$, we have $\comp_G(v,\omega) = \comp_{T'}(v,\omega)$ for every $v \in V(T')$, and so exactly the same vertices have their colour changed to $d'$ when $\alpha$ is played in $G$ as when it is played in $T'$.  Thus $\alpha(\omega,G) = \alpha(\omega,T')$.  Note that $T'$ does not necessarily preserve monochromatic components of $G$ with respect to the new colouring $\alpha(\omega,T')$, but we are nevertheless able to apply the inductive hypothesis to the graph with this colouring.

Thus we see that
\begin{align*}
m_G(G,\omega,d) & \leq 1 + m_G(G,\alpha(\omega,G),d) \\
                & = 1 + m_G(G,\alpha(\omega,T'),d) \\
                & \leq 1 + m_{T'}(T',\alpha(\omega,T'),d) & \mbox{by inductive hypothesis} \\
                & = m_{T'}(T',\omega,d) \\
                & \leq m_T(T,\omega,d),
\end{align*}
as required.                
\end{proof}

Together with the preceding results, this proves our main theorem.

\begin{proof}[Proof of Theorem \ref{min-tree}]
The proof follows immediately from Lemma \ref{T<=G} and Lemma \ref{G<=T}.
\end{proof}

We now prove five corollaries of Theorem \ref{min-tree}, which will be useful in the following sections.  In the first two of these, we exploit Theorem \ref{min-tree} to generalise Lemma \ref{tree-splitting} very substantially.  We begin by showing that the result can be extended from trees to arbitrary graphs.

\begin{cor}
Let $G$ be a connected graph, with colouring $\omega$ from colour-set $C$, and let $A$ and $B$ be disjoint subsets of $V(G)$ such that $V(G) = A \cup B$ and $G[A], G[B]$ are connected.  Then, for any $d \in C$,
$$m_G(G,\omega,d) \leq m_{G[A]}(A,\omega|_A,d) + m_{G[B]}(B,\omega|_B,d).$$
\label{decomposition}
\end{cor}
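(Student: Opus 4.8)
The plan is to reduce the corollary, which is stated for an arbitrary connected graph $G$, back to the tree case already handled in Lemma \ref{tree-splitting}, using Theorem \ref{min-tree} as the bridge. The key observation is that Theorem \ref{min-tree} lets me replace each of the two subgraphs $G[A]$ and $G[B]$ by a spanning tree that is optimal for flooding it with colour $d$, and then I need only glue these two trees together into a single spanning tree of $G$ in such a way that Lemma \ref{tree-splitting} applies.

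Concretely, first I would apply Theorem \ref{min-tree} to each of the connected subgraphs $G[A]$ and $G[B]$ individually. This yields a spanning tree $T_A$ of $G[A]$ with $m_{T_A}(A,\omega|_A,d) = m_{G[A]}(A,\omega|_A,d)$, and likewise a spanning tree $T_B$ of $G[B]$ with $m_{T_B}(B,\omega|_B,d) = m_{G[B]}(B,\omega|_B,d)$. Next, since $G$ is connected and $A,B$ partition $V(G)$, there must exist at least one edge $e = ab \in E(G)$ with $a \in A$ and $b \in B$; fix one such $e$. I then define $T = T_A \cup T_B \cup \{e\}$. This $T$ is a spanning tree of $G$: it spans $V(G) = A \cup B$, it has $(|A|-1) + (|B|-1) + 1 = |V(G)| - 1$ edges, and it is connected because $T_A$ and $T_B$ are each connected and $e$ joins them, so it is acyclic and connected, hence a tree.

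The crucial point is that $A$ and $B$ are exactly the vertex-sets of the two subtrees of $T$ obtained by deleting $e$, and $T[A] = T_A$, $T[B] = T_B$ are connected. This is precisely the hypothesis of Lemma \ref{tree-splitting} applied to the tree $T$ with the partition $A \cup B$. Applying that lemma gives
$$m_T(T,\omega,d) \leq m_{T[A]}(A,\omega|_A,d) + m_{T[B]}(B,\omega|_B,d) = m_{G[A]}(A,\omega|_A,d) + m_{G[B]}(B,\omega|_B,d),$$
where the equality uses the optimality of $T_A$ and $T_B$ together with the fact that $m_{T[A]} = m_{T_A}$ and $m_{T[B]} = m_{T_B}$. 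Finally, since $T$ is a spanning tree of $G$, Theorem \ref{min-tree} (or directly Lemma \ref{G<=T}) gives $m_G(G,\omega,d) \leq m_T(T,\omega,d)$, and chaining the two inequalities yields the desired bound.

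I do not anticipate a serious obstacle here, since all the heavy lifting has already been done in the earlier lemmas; the corollary is essentially a bookkeeping assembly. The one point that needs a little care is making sure the restricted colourings match up correctly: I should note that $\omega|_A$ restricted to the tree $T_A$ is the same colouring used in Theorem \ref{min-tree} when applied to $G[A]$, and similarly for $B$, so that the values $m_{T_A}(A,\omega|_A,d)$ and $m_{G[A]}(A,\omega|_A,d)$ genuinely coincide. If anything requires a sentence of justification it is the claim that deleting $e$ from $T$ recovers exactly the components $A$ and $B$ with the correct induced edge-sets, but this is immediate from the construction $T = T_A \cup T_B \cup \{e\}$.
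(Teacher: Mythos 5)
Your proof is correct and follows exactly the same route as the paper's: take optimal spanning trees $T_A$, $T_B$ of $G[A]$, $G[B]$ via Theorem \ref{min-tree}, join them by a single crossing edge to form a spanning tree $T$ of $G$, apply Lemma \ref{tree-splitting} to $T$, and finish with $m_G(G,\omega,d) \leq m_T(T,\omega,d)$ from Theorem \ref{min-tree}. No gaps; the additional remarks on edge counts and matching restricted colourings are just extra bookkeeping the paper leaves implicit.
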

\begin{proof}
First note that, by Theorem \ref{min-tree}, there exist spanning trees $T_A$ and $T_B$ of $G[A]$ and $G[B]$ respectively, such that $m_{T_A}(T_A,\omega|_A,d) = m_{G[A]}(A,\omega|_A,d)$ and $m_{T_B}(T_B,\omega,d) = m_{G[B]}(B,\omega|_B,d)$.  As $G$ is connected, there exists an edge $e$ between $A$ and $B$.  Set $T$ to be the tree with vertex set $V(G)$ and edge set $E(T_A) \cup E(T_B) \cup \{e\}$.  By Lemma \ref{tree-splitting}, $m_T(T,\omega,d) \leq m_{T_A}(T_A,\omega|_A,d) + m_{T_B}(T_B,\omega|_B,d)$; but $T$ is a spanning tree for $G$ so, by Theorem \ref{min-tree}, 
\begin{align*}
m_G(G,\omega,d) & \leq m_T(T,\omega,d) \\
				& \leq m_{T_A}(T_A,\omega|_A,d) + m_{T_B}(T_B,\omega|_B,d) \\
				& = m_{G[A]}(A,\omega|_A,d) + m_{G[B]}(B,\omega|_B,d).
\end{align*}				
\end{proof}

We now generalise this result even further, demonstrating that we do not in fact require $A$ and $B$ to be disjoint.

\begin{cor}
Let $G$ be a connected graph, with colouring $\omega$ from colour-set $C$, and let $A$ and $B$ be subsets of $V(G)$ such that $V(G) = A \cup B$ and $G[A], G[B]$ are connected.  Then, for any $d \in C$,
$$m_G(G,\omega,d) \leq m_{G[A]}(A,\omega|_A,d) + m_{G[B]}(B,\omega|_B,d).$$
\label{general-decomposition}
\end{cor}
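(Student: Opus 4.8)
The plan is to deduce the overlapping case from the disjoint case, Corollary~\ref{decomposition}, by induction on the size of the overlap $|A \cap B|$. The base case $|A \cap B| = 0$ is precisely Corollary~\ref{decomposition} (connectedness of $G$ supplies the required edge between the two parts). So I would assume $A \cap B \neq \emptyset$ and that the result holds whenever the overlap is strictly smaller, and aim to delete a single vertex $x$ from the overlap so as to reach a decomposition of $V(G)$ with overlap $|A \cap B| - 1$, to which the inductive hypothesis applies.

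The device driving the inductive step is a monotonicity observation: deleting a leaf of a tree cannot increase the number of moves needed to flood it, since the colour history of the remaining vertices is unaffected by a degree-one vertex (a move played at the leaf either leaves the other vertices unchanged or may be replayed at the leaf's unique neighbour, in each case without increasing the move count). Combined with Theorem~\ref{min-tree}, this yields the following. If $x \in A \cap B$ is a leaf of some $d$-minimal spanning tree $T_A$ of $G[A]$ (such a tree exists by Theorem~\ref{min-tree}), then $T_A - x$ is a spanning tree of $G[A \setminus \{x\}]$, so $m_{G[A \setminus \{x\}]}(A \setminus \{x\}, \omega|_{A \setminus \{x\}}, d) \leq m_{T_A - x}(T_A - x, \omega, d) \leq m_{T_A}(T_A, \omega, d) = m_{G[A]}(A, \omega|_A, d)$. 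Replacing $A$ by $A \setminus \{x\}$ (or symmetrically $B$ by $B \setminus \{x\}$) keeps both induced subgraphs connected, keeps the union equal to $V(G)$ because $x$ still lies in the other part, shrinks the overlap by one, and does not increase the right-hand side of the desired inequality, so the inductive hypothesis closes the step.

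The main obstacle is guaranteeing that such a peelable vertex exists. Only a vertex of $A \cap B$ may be removed, since deleting a vertex of $A \setminus B$ or $B \setminus A$ would lose it from $V(G)$ altogether, and the reduction needs that vertex to be a leaf of a $d$-minimal spanning tree of at least one of $G[A]$, $G[B]$ — in particular a non-cut vertex of that side. This breaks down in the \emph{central-overlap} configuration, where every vertex of $A \cap B$ is a cut vertex of both $G[A]$ and $G[B]$, for instance when $G$ consists of two stars sharing only their common centre. There no vertex can be peeled, and the tempting shortcut of detaching $A \cap B$ and summing the flood numbers of the resulting components of $G[B] \setminus A$ overcounts badly, since an optimal flooding sequence for $G[B]$ can exploit the shared vertices to merge several components in a single move. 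I expect this stubborn case to carry the real content, and to require a direct sequence-combination argument in the spirit of the proof of Lemma~\ref{tree-splitting}: take $d$-minimal spanning trees $T_A$ and $T_B$, glue them into a spanning tree $T$ of $G$ in which $T[A] = T_A$, and bound $m_T(T, \omega, d)$ by analysing how an optimal flooding sequence for one side behaves as it passes through the shared cut vertices, rather than through any clean set-theoretic reduction.
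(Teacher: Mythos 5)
Your reduction to the disjoint case is the right instinct, but the argument as written has a genuine gap, and you have located it yourself: the inductive step has nothing to say when every vertex of $A \cap B$ is a cut vertex of both $G[A]$ and $G[B]$ (your two-stars-sharing-a-centre example), and your closing sentence only conjectures that a ``direct sequence-combination argument'' will handle it. That case is not a degenerate leftover --- it is where the whole difficulty of the overlapping statement lives, so the proof is incomplete. There is also a smaller unaddressed point in the step you do carry out: even when $x \in A \cap B$ is a non-cut vertex of $G[A]$, you need $x$ to be a leaf of some \emph{$d$-minimal} spanning tree of $G[A]$, and Theorem~\ref{min-tree} and Lemma~\ref{add-e} only let you force a chosen \emph{edge} into a $d$-minimal tree, not force a chosen vertex to have degree one; that would need a separate argument.

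The paper avoids all of this with a single construction that you are missing: rather than shrinking the overlap one vertex at a time, it eliminates the overlap wholesale by \emph{splitting} every $v \in X = A \cap B$ into two adjacent vertices $v_A, v_B$ of the same colour $\omega(v)$, with $v_A$ inheriting the $A$-side adjacencies and $v_B$ the $B$-side ones (and $v_A w_A$, $v_B w_B$ for edges $vw$ inside $X$). Because each pair $v_A v_B$ is a monochromatic edge, contracting monochromatic components of the new coloured graph $(G',\omega')$ gives the same result as contracting those of $(G,\omega)$, so $m_G(G,\omega,d) = m_{G'}(G',\omega',d)$; and the sets $A' = (A\setminus X) \cup \{v_A\}$, $B' = (B\setminus X)\cup\{v_B\}$ are disjoint, induce connected subgraphs isomorphic as coloured graphs to $G[A]$ and $G[B]$, and partition $V(G')$, so Corollary~\ref{decomposition} finishes in one line --- uniformly, with no case analysis on cut vertices. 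If you want to salvage your induction, the cleanest fix is to replace the peeling step by this blow-up; otherwise the central-overlap case really does require you to reprove something of the strength of Lemma~\ref{tree-splitting} for trees glued at a vertex, which is more work than the corollary should need.
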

\begin{proof}
If $A \cap B = \emptyset$ then the result is immediate from Corollary \ref{decomposition}, so assume that $A \cap B = X \neq \emptyset$.  We now construct a new graph $G'$ by ``blowing up'' every vertex in $X$, replacing each $v \in X$ with a pair of adjacent vertices.  We set
$$V(G') = (V(G) \setminus X) \cup \{v_A, v_B: v \in X\},$$ 
and 
\begin{align*}
E(G') = & E(G \setminus X) \\
        & \cup \{v_Av_B: v \in X\} \\
        & \cup \{v_Au, v_Bu: v \in X, u \notin X, uv \in E(G)\} \\
        & \cup \{v_Aw_A,v_Bw_B : vw \in E(G[X])\},
\end{align*}        
and give $G'$ a colouring $\omega'$, where $\omega'(v_A) = \omega'(v_B) = \omega(v)$ for $v \in X$ and $\omega'(u) = \omega(u)$ for $u \notin X$.  Observe that contracting monochromatic components in both $(G,\omega)$ and $(G',\omega')$ will give identical coloured graphs, so we must have 
$$m_G(G,\omega,d) = m_{G'}(G',\omega',d).$$

We then define $A' = A \setminus X \cup \{v_A: v \in X\}$ and $B' = B \setminus X \cup \{v_B: v \in X\}$.  Now, $A'$ and $B'$ partition $V(G')$ and induce connected subgraphs, so we can apply Corollary \ref{decomposition} to see that
$$m_{G'}(G',\omega',d) \leq m_{G[A']}(A',\omega'|_{A'},d) + m_{G[B']}(B',\omega'|_{B'},d).$$
But $G[A']$ with colouring $\omega'|_{A'}$ is identical to $G[A]$ with colouring $\omega|_A$, so $m_{G[A']}(A',\omega'|_{A'},d) = m_{G[A]}(A,\omega|_A,d)$, and similarly $m_{G[B']}(G[B'],\omega'|_{B'},d) = m_{G[B]}(B,\omega|_B,d)$. Thus
$$m_G(G,\omega,d) = m_{G'}(G',\omega',d) \leq m_{G[A]}(A,\omega|_A,d) + m_{G[B]}(B,\omega|_B,d),$$
as required.
\end{proof}

The next two corollaries are concerned with the number of moves required to flood a connected subgraph $H$ of a graph $G$.  We begin by showing that adding additional edges to $H$ cannot increase the number of moves required to flood the graph.

\begin{cor}
Let $G$ be a connected graph with colouring $\omega$ from colour-set $C$, and $H$ a connected spanning subgraph of $G$. Then, for any $d \in C$,
$$m_{G}(G,\omega,d) \leq m_H(H,\omega,d).$$
\label{add-edges}
\end{cor}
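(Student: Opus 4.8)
The plan is to derive this entirely from Theorem \ref{min-tree}, which reduces the number of flooding moves on a graph to a minimum over its spanning trees. The single key observation is that, because $H$ is a \emph{spanning} subgraph of $G$, we have $V(H) = V(G)$ and $E(H) \subseteq E(G)$; consequently every spanning tree of $H$ is also a spanning tree of $G$, so $\mathcal{T}(H) \subseteq \mathcal{T}(G)$.

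With this containment in hand, I would simply apply Theorem \ref{min-tree} to both graphs and compare the two minimisations. Since $H$ and $G$ carry the same colouring $\omega$ (restricted to the common vertex set), and since any given tree $T$ is flooded identically whether we regard it as a spanning tree of $H$ or of $G$, we obtain
$$m_G(G,\omega,d) = \min_{T \in \mathcal{T}(G)} m_T(T,\omega,d) \leq \min_{T \in \mathcal{T}(H)} m_T(T,\omega,d) = m_H(H,\omega,d),$$
where the middle inequality holds because taking a minimum over the larger index set $\mathcal{T}(G)$ can only decrease (or leave unchanged) the value relative to the minimum over the subset $\mathcal{T}(H)$.

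There is essentially no obstacle here: the entire content is the inclusion $\mathcal{T}(H) \subseteq \mathcal{T}(G)$ together with monotonicity of the minimum under enlarging the feasible set. The only point deserving a word of care is to confirm that the quantity $m_T(T,\omega,d)$ attached to a fixed tree $T$ is intrinsic to $T$ and does not depend on whether $T$ is viewed as sitting inside $H$ or inside $G$ — but this is immediate, since $m_T$ concerns flooding the isolated tree $T$ and makes no reference to the ambient graph. Thus the corollary follows in a couple of lines once Theorem \ref{min-tree} is invoked.
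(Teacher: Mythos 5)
Your proof is correct and is essentially identical to the paper's own argument: both rest on the observation that $\mathcal{T}(H) \subseteq \mathcal{T}(G)$ and then apply Theorem \ref{min-tree} to each graph, using monotonicity of the minimum over an enlarged index set. Your additional remark that $m_T(T,\omega,d)$ is intrinsic to the isolated tree $T$ is a sensible point of care, though the paper treats it as implicit.
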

\begin{proof}
As $H$ is a connected spanning subgraph of $G$, we have $\mathcal{T}(H) \subseteq \mathcal{T}(G)$.  Thus, by Theorem \ref{min-tree}, 
$$m_{G}(G,\omega,d) = \min_{T \in \mathcal{T}(G)}(T,\omega,d) \leq \min_{T \in \mathcal{T}(H)}(T,\omega,d) = m_H(H,\omega,d).$$
\end{proof}

Now we consider the case in which $H$ is an arbitrary subgraph, and show that the number of moves we must play in $G$ to link the vertices of $H$ is at most the number required to flood the isolated subgraph $H$.

\begin{cor}
Let $G$ be a connected graph with colouring $\omega$ from colour-set $C$, and $H$ a connected subgraph of $G$.  Then, for any $d \in C$,
$$m_G(V(H),\omega,d) \leq m_H(H,\omega|_H,d).$$
\label{subgraph}
\end{cor}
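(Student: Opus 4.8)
The plan is to reduce the statement to a question about a single spanning tree of $G$, where the subtree-independence observation from the proof of Lemma \ref{tree-splitting} applies, and then to transfer the resulting bound from that tree back to $G$. First I would apply Theorem \ref{min-tree} to $H$ to obtain a spanning tree $T_H$ of $H$ with $m_{T_H}(T_H,\omega|_H,d)=m_H(H,\omega|_H,d)$. Since $T_H$ is an acyclic subgraph of $G$ whose vertex set is $V(H)$, it extends to a spanning tree $T$ of $G$, and $T_H$ is a subtree of $T$. Let $S$ be an optimal sequence to flood the isolated tree $T_H$ with colour $d$, so $|S|=m_H(H,\omega|_H,d)$. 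By the observation in the proof of Lemma \ref{tree-splitting} (a sequence played in a subtree has the same effect whether played in the subtree or in the ambient tree, as vertices of $T_H$ cannot be linked in $T$ before they are linked in $T_H$), playing $S$ in $T$ makes every vertex of $V(H)$ colour $d$; since $T_H$ is connected, these vertices then lie in a single monochromatic component of $T$. Hence $m_T(V(H),\omega,d)\le |S| = m_H(H,\omega|_H,d)$.

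It remains to prove $m_G(V(H),\omega,d)\le m_T(V(H),\omega,d)$, that is, that adding the remaining edges of $G$ to the spanning tree $T$ cannot increase the number of moves needed to link the fixed set $V(H)$; I expect this transfer to be the main obstacle. The naive approach of replaying an optimal $T$-sequence directly in $G$ fails, because a move recolours the monochromatic component of its chosen vertex, and this component can be strictly larger in $G$ than in $T$; consequently the colourings produced in $G$ and in $T$ diverge after the first such move, and one cannot argue termwise that $V(H)$ ends up monochromatic in colour $d$. To control this I would mirror the inductive strategy of Lemma \ref{G<=T}, proving by induction on $m_T(V(H),\omega,d)$ that $m_G(V(H),\omega,d)\le m_T(V(H),\omega,d)$ for \emph{every} spanning tree $T$ of $G$. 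In the inductive step I would pass to a spanning tree $T'$ of $G$ that both minimises the cost of linking $V(H)$ (so that $m_{T'}(V(H),\omega,d)\le m_T(V(H),\omega,d)$) and preserves the monochromatic components of $G$ with respect to $\omega$. Component-preservation is exactly what makes the first move $\alpha$ of an optimal $T'$-sequence have identical effect in $G$ and in $T'$, giving $\alpha(\omega,G)=\alpha(\omega,T')$, so that the induction closes precisely as in Lemma \ref{G<=T}.

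The one genuinely new piece of work is establishing the existence of such a $T'$, for which I would prove linking-analogues of Lemma \ref{add-e} and Lemma \ref{tree-preserves-cmpts}: that given a tree $T$ and an edge $e\notin E(T)$ there is a spanning tree of $T\cup\{e\}$ containing $e$ whose cost to link $V(H)$ is no larger, and hence that every coloured graph has a spanning tree that minimises the cost of linking $V(H)$ while preserving monochromatic components. These follow the templates of their flooding counterparts, replacing ``flood with colour $d$'' by ``link $V(H)$ in colour $d$'' throughout and invoking Lemma \ref{tree-splitting} to recombine the independently-flooded subtrees; the main thing to check is that the bookkeeping for the final move and the partition of the move-sequence still goes through when the target is a fixed set $V(H)$ rather than the whole vertex set. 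With these analogues in hand, the two displayed inequalities combine to give $m_G(V(H),\omega,d)\le m_T(V(H),\omega,d)\le m_H(H,\omega|_H,d)$, as required; note also that the special case in which $H$ is an induced connected subgraph together with Corollary \ref{add-edges} recovers the general (non-induced) case, which provides a useful sanity check on the argument.
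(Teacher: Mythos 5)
Your outline is workable, but it takes a genuinely different and much heavier route than the paper, and it defers essentially all of the real work to machinery you do not construct. The paper's proof rests on one idea your plan misses: rather than passing to a spanning tree of $H$ or of $G$, it replaces $H$ by $H'$, the subgraph of $G$ induced by $\bigcup_{v \in V(H)} \comp_G(v,\omega)$. Since $H'$ is a union of whole monochromatic components of $G$, we have $\comp_{H'}(v,\omega) = \comp_G(v,\omega)$ for every $v \in V(H')$, so the first move of an optimal sequence for the \emph{isolated} graph $H'$ has identical effect when played in $G$; a short induction on $m_{H'}(H',\omega|_{H'},d)$ then gives $m_G(V(H'),\omega,d) \leq m_{H'}(H',\omega|_{H'},d)$, and the remaining comparison $m_{H'}(H',\omega|_{H'},d) \leq m_H(H,\omega|_H,d)$ is just Corollary \ref{add-edges}, because after contracting monochromatic components $H'$ is $H$ with extra edges. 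This sidesteps exactly the obstacle you correctly identify (components in $G$ being larger than in $H$) without any spanning tree of $G$, any component-preserving tree, or any new lemmas. Your route instead requires linking analogues of Lemmas \ref{add-e}, \ref{tree-preserves-cmpts} and \ref{G<=T}, which you only assert ``follow the templates''; I believe they do go through (for the analogue of Lemma \ref{add-e} one should work with an optimal subtree $A \supseteq V(H)$ that gets flooded and delete a cycle edge outside $T[A]$ whenever the cycle leaves $A$, with a little care in the case where both endpoints of $e$ lie in $A$), but this amounts to redeveloping a parallel copy of Section \ref{trees} for the linking objective, which is disproportionate for a corollary. Your first step, extending an optimal spanning tree $T_H$ of $H$ to a spanning tree $T$ of $G$ and using the subtree-independence observation, is correct as far as it goes; the economical way to finish from there is still the paper's $H'$ device rather than a component-preserving minimiser for the linking cost.
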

\begin{proof}
Set $H'$ to be the subgraph of $G$ induced by $\bigcup_{v \in V(H)} \comp_G(v,\omega)$, and note that a sequence of moves played in $G$ floods $H'$ if and only if it floods $H$ (with the same colour).  Observe that we can add edges to $H$ to obtain a coloured graph equivalent to $H'$ (when both graphs have colouring inherited from $\omega$): if an edge is added in $H$ between every pair of vertices that belong to either the same monochromatic component or adjacent monochromatic components in $G$, then contracting monochromatic components in this new graph will give the same result as contracting monochromatic components of $H'$.  Thus, by Corollary \ref{add-edges}, $m_{H'}(H',\omega|_{H'},d) \leq m_H(H,\omega|_H,d)$.

We proceed to prove the inequality by induction on $m_{H'}(H',\omega,d)$.  Note that if $m_{H'}(H',\omega,d)=0$ then the result is trivially true, so assume that $m_{H'}(H',\omega,d)>0$.  Let $S$ be an optimal sequence to flood the (isolated) subgraph $H'$ with colour $d$, and suppose the first move of $S$ is $\alpha$.  As we have $\comp_{H'}(v,\omega) = \comp_G(v,\omega)$ for every $v \in V(H')$, the move $\alpha$ changes the colour of exactly the same vertices when played in $H'$ as it does when played in the larger graph $G$.  Thus $\alpha(\omega,G)|_{H'} = \alpha(\omega|_{H'},H')$.  

Note that $m_{H'}(H',\omega|_{H'},d) = 1 + m_{H'}(H',\alpha(\omega|_{H'},H'),d)$, so 
$$m_{H'}(H',\alpha(\omega,G)|_{H'},d) = m_{H'}(H',\alpha(\omega|_{H'},H'),d) < m_H(H,\omega|_H,d),$$  
and we can apply the inductive hypothesis to see that
\begin{equation}
m_G(V(H'),\alpha(\omega,G),d) \leq m_{H'}(H',\alpha(\omega,G)|_{H'},d).
\label{subg-ind}
\end{equation}
Hence
\begin{align*}
m_G(V(H),\omega,d) & = m_G(V(H'),\omega,d) \\
                & \leq 1 + m_G(V(H'),\alpha(\omega,G),d) \\
                & \leq 1 + m_{H'}(H',\alpha(\omega,G)|_{H'},d) \\ 
                & \qquad \qquad \qquad \qquad \qquad \qquad \mbox{by (\ref{subg-ind})}\\
                & = 1 + m_{H'}(H',\alpha(\omega|_{H'},H'),d) \\
                & \qquad \qquad \qquad \qquad \qquad \qquad \mbox{as $\alpha(\omega,G)|_{H'} = \alpha(\omega|_{H'},H')$} \\
                & = m_{H'}(H',\omega|_{H'},d) \\
                & \leq m_H(H,\omega|_H,d),
\end{align*} 
as required.
\end{proof}

Finally, we consider the number of moves required to connect a given subset of the vertices of $G$.  For any $U \subseteq V(G)$, let $\mathcal{T}(U,G)$ be the set of all subtrees $T$ of $G$ such that $U \subseteq V(T)$.  We then characterise the number of moves required to link $U$ in terms of the number of moves required to flood elements of $\mathcal{T}(U,G)$.

\begin{cor}
Let $G$ be a connected graph with colouring $\omega$ from colour-set $C$, and let $U \subseteq V(G)$.  Then, for any $d \in C$, 
$$m_G(U,\omega,d) = \min_{T \in \mathcal{T}(U,G)} m_T(T,\omega|_T,d).$$
\label{set-tree}
\end{cor}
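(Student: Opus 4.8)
The plan is to prove the two inequalities separately. The inequality $m_G(U,\omega,d) \le \min_{T \in \mathcal{T}(U,G)} m_T(T,\omega|_T,d)$ follows directly from Corollary \ref{subgraph}: for any fixed $T \in \mathcal{T}(U,G)$, that corollary (applied with $H = T$) gives $m_G(V(T),\omega,d) \le m_T(T,\omega|_T,d)$, and since $U \subseteq V(T)$ any sequence that links $V(T)$ in colour $d$ also links $U$ in colour $d$, so that $m_G(U,\omega,d) \le m_G(V(T),\omega,d) \le m_T(T,\omega|_T,d)$. Minimising over $T \in \mathcal{T}(U,G)$ gives the bound.

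For the reverse inequality it suffices to produce a single $T \in \mathcal{T}(U,G)$ with $m_T(T,\omega|_T,d) \le m_G(U,\omega,d)$, and I would do this by induction on $m_G(U,\omega,d)$, following the structure of Lemma \ref{T<=G}. If $m_G(U,\omega,d) = 0$ then $U$ lies in a single monochromatic component $W$ of colour $d$; as $G[W]$ is connected I take $T$ to be any subtree of $G[W]$ containing $U$, which is already flooded. Otherwise I take an optimal sequence $S$ linking $U$ in $d$ and examine its last move. If, just before this move, $U$ already lies in a single (necessarily non-$d$) monochromatic component, I apply the inductive hypothesis with the intermediate colour and then recolour the resulting tree with one extra move. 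In the remaining case the last move merges components $A_1,\dots,A_r$ (with $r \ge 2$) whose union $W$ is the final colour-$d$ component containing $U$; here I would, for each $i$, set $S_i$ to be the subsequence of $S$ played in $A_i$, apply the inductive hypothesis to each $G[A_i]$ (treating $A_1$, whose colour changes at the last move, with one extra recolouring move exactly as in Lemma \ref{T<=G}), and glue the resulting subtrees along edges of $G$ joining $A_1$ to the other $A_i$. Lemma \ref{tree-splitting} (equivalently Corollary \ref{decomposition}) then bounds the number of moves needed to flood the resulting tree $T$, which spans $W$ and hence contains $U$, by $\sum_i |S_i| + 1 \le |S| = m_G(U,\omega,d)$.

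I expect the main obstacle to be precisely the point at which this argument departs from Lemma \ref{T<=G}: there the final component is all of $V(G)$, whereas here $W$ may be a proper subgraph, so the optimal sequence $S$ may contain moves played outside $W$. Consequently the subsequences $S_i$ need not partition $S$, and it is no longer immediate that $S_i$, played in the isolated subgraph $G[A_i]$, actually floods it. The crux is therefore to show that one may restrict attention to an optimal linking sequence all of whose moves are played within $W$ -- equivalently, that moves outside $W$ can be discarded without increasing the total. Granting this, the statement reduces cleanly to applying Theorem \ref{min-tree} to $G[W]$: this gives $\min_{T \in \mathcal{T}(G[W])} m_T(T,\omega|_T,d) \le m_G(U,\omega,d)$, and every spanning tree of $G[W]$ contains $W \supseteq U$ and so lies in $\mathcal{T}(U,G)$.
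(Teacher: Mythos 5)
Your proof of the inequality $m_G(U,\omega,d) \le \min_{T \in \mathcal{T}(U,G)} m_T(T,\omega|_T,d)$ is exactly the paper's: monotonicity of $m_G(\cdot,\omega,d)$ under passing to a subset, combined with Corollary \ref{subgraph}. For the reverse inequality, the argument you arrive at in your closing paragraph is also exactly the paper's, and the intermediate induction mimicking Lemma \ref{T<=G} is unnecessary: take an optimal sequence $S$ linking $U$, let $A$ be the monochromatic component of $S(\omega,G)$ containing $U$, let $S'$ be the subsequence of $S$ played in $A$; then $S'$ floods $G[A]$ with colour $d$, so $m_G(U,\omega,d)=|S|\ge|S'|\ge m_{G[A]}(A,\omega|_A,d)$, and Theorem \ref{min-tree} supplies a spanning tree of $G[A]$, which contains $U$ and hence lies in $\mathcal{T}(U,G)$. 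The one step you flag as the ``crux'' and leave open --- that moves outside $A$ can be discarded and that $S'$, played in the isolated subgraph $G[A]$, really does flood it --- is asserted without proof in the paper as well, and it is genuinely easy: once two vertices lie in a common monochromatic component they remain in a common one for the rest of the game, so at no stage can the monochromatic component of a vertex of $A$ contain a vertex outside $A$ (such a vertex would then belong to the final component $A$). Consequently a move played outside $A$ never recolours a vertex of $A$, and a move played at $v\in A$ recolours exactly $\comp_G(v,\cdot)=\comp_{G[A]}(v,\cdot)$, so the evolution of the colouring on $A$ under $S$ in $G$ coincides with that under $S'$ in $G[A]$. With that observation supplied, your argument is complete and matches the paper's.
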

\begin{proof}
We begin by showing that $m_G(U,\omega,d) \leq \min_{T \in \mathcal{T}(U,G)} m_T(T,\omega|_T,d)$.  Let $T \in \mathcal{T}(U,G)$.  Then, as $U \subseteq V(T)$, we see by definition of $m_G(U,\omega,d)$ that $m_G(U,\omega,d) \leq m_G(V(T),\omega,d)$.  Moreover, by Corollary \ref{subgraph} we know that $m_G(V(T),\omega,d) \leq m_T(T,\omega|_T,d)$, and so $m_G(U,\omega,d) \leq m_T(T,\omega|_T,d)$.  As this holds for any $T \in \mathcal{T}(U,G)$, it follows that 
$$m_G(U,\omega,d) \leq \min_{T \in \mathcal{T}(U,G)} m_T(T,\omega|_T,d).$$

To show the reverse inequality, suppose that $S$ is an optimal sequence to link $U$ in $G$.  Let $A$ be the vertex set of the monochromatic component, with respect to $S(\omega,G)$, that contains $U$, and let $S'$ be the subsequence of $S$ consisting of moves played in $A$.  Then $S'$ makes $G[A]$ monochromatic with colour $d$, and so we have
$$m_G(U,\omega,d) = |S| \geq |S'| \geq m_{G[A]}(A,\omega|_A,d).$$
By Theorem \ref{min-tree}, there exists a spanning tree $T_A$ for $A$ with $m_A(A,\omega|_A,d) = m_{T_A}(T_A,\omega|_{T_A},d)$.  But then $U \subseteq V(T)$, so $T_A \in \mathcal{T}(U,G)$ and we have 
$$m_G(U,\omega,d) \geq m_{G[A]}(A,\omega|_A,d) = m_{T_A}(T_A,\omega|_{T_A},d) \geq \min_{T \in \mathcal{T}(U,G)} m_{T}(T,\omega|_{T},d).$$
\end{proof}

In summary, we see from Corollary \ref{general-decomposition} that the number of moves required to flood a graph is bounded above by the sum of the numbers of moves required to flood connected subgraphs which cover the vertex set, whereas Corollaries \ref{add-edges} and \ref{subgraph} show that adding edges and vertices to a connected graph $H$ can only decrease the number of moves required to make the vertex-set of $H$ monochromatic.  Corollary \ref{set-tree} allows us to calculate the minimum number of moves required to connect some subset of the vertices by considering the number of moves required to flood subtrees of $G$.

\section{Graphs with polynomial bounds on the numbers of connected subgraphs}
\label{poly}

Given a vertex $v$ in an arbitrary graph $G$, the number of possible values of $\comp_G(v,\omega)$, as $\omega$ ranges over all possible colourings of $G$, will in general be exponential.  However, it is clear that $\comp_G(v,\omega)$ must be a connected subgraph of $G$ containing $v$, and in some interesting classes of graphs the number of connected subgraphs containing any given vertex is bounded by a polynomial function of $|G|$.  In this section we discuss polynomial time algorithms to solve flood-filling problems in this situation.

First, in Section \ref{free}, we apply corollaries of Theorem \ref{min-tree} to show that \textsc{Free-Flood-It} can be solved in polynomial time on graphs which have only a polynomial number of connected subgraphs.  Then, in Section \ref{fixed}, we give a direct proof that the same is true for the fixed variant.

It should be noted, however, that this condition is not \emph{necessary} for a graph to admit a polynomial-time algorithm to solve \textsc{Free-Flood-It}.  $K_n$ has $\Theta(2^n)$ connected induced subgraphs, but the number of moves required to flood the graph in either version of the game is always one fewer than the number of colours used in the initial colouring.  Graphs corresponding to rectangular $2 \times n$ boards give another such example for the fixed case, as there are $\Omega(2^n)$ connected subgraphs containing any given vertex but \textsc{Fixed-Flood-It} can be solved in linear time in this setting \cite{clifford}.

\subsection{The FREE-FLOOD-IT case}
\label{free}

In this section we prove the following theorem.
\begin{thm}
Let $p$ be a polynomial, and let $\mathcal{G}_p$ be the class of graphs such that, for any $G \in \mathcal{G}_p$, the number of connected subgraphs of $G$ is at most $p(|G|)$.  Suppose $G \in \mathcal{G}_p$ has colouring $\omega$ from colour-set $C$.  Then, for any $d \in C$, we can compute $m_G(G,\omega,d)$ in polynomial time, and hence we can also compute $m_G(G,\omega)$ in polynomial time.
\label{poly-areas}
\end{thm}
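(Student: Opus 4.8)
The plan is to set up a dynamic program over the connected induced subgraphs of $G$. For each vertex set $W$ with $G[W]$ connected and each colour $d$, let $f(W,d) = m_{G[W]}(W,\omega|_W,d)$ be the minimum number of moves to flood the isolated induced subgraph $G[W]$ with colour $d$; the quantity we want is then exactly $f(V(G),d) = m_G(G,\omega,d)$. Since $G \in \mathcal{G}_p$ has at most $p(|G|)$ connected subgraphs, and since we may assume the colour-set contains at most $|G|$ relevant colours (only colours appearing in $\omega$, together with the target $d$, ever matter), the table $\{f(W,d)\}$ has only polynomially many entries.

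The core will be a recurrence expressing $f(W,d)$ in terms of values on strictly smaller subgraphs. To obtain it, examine an optimal sequence $S$ flooding $G[W]$ with colour $d$ and inspect its final move. Either $G[W]$ is already monochromatic (in some colour $d'$) just before this move, or the final move merges $r \geq 2$ monochromatic components $A_1,\dots,A_r$ of $G[W]$. In the latter case, the partition-of-$S$ argument of Lemma \ref{T<=G} gives $f(W,d) = \sum_{i=1}^r f(A_i,d)$; moreover same-coloured components are pairwise non-adjacent, so the merging structure is a star centred on $A_1$, and grouping the leaves realises this sum as $f(A,d)+f(B,d)$ for a partition $W = A \sqcup B$ into two connected induced parts joined by an edge. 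Conversely, Corollary \ref{decomposition} gives $f(W,d) \leq f(A,d)+f(B,d)$ for every such split, while $f(W,d) \leq 1 + f(W,d')$ is trivial. Hence
\[
f(W,d) = \min\Bigl\{\,1 + \min_{d'\neq d} f(W,d'),\ \min_{W = A\sqcup B} \bigl(f(A,d)+f(B,d)\bigr)\Bigr\},
\]
the inner minimum ranging over partitions of $W$ into non-empty connected induced parts with an edge between them.

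I would then dispose of the apparent circularity of the first term, which refers to $f(W,\cdot)$ for the same $W$: writing $h(W,d)$ for the ``splitting'' term and $M = \min_{d'} h(W,d')$, a recolouring is never used more than once, so $f(W,d) = \min\{h(W,d),\,M+1\}$, which is determined once all $h(W,\cdot)$ are known. Processing the connected subgraphs in non-decreasing order of size then fills the whole table: each admissible split of $W$ uses pieces $A,B \subsetneq W$ whose values are already available, the candidate sets $A$ range over the (polynomially many) connected subgraphs contained in $W$, and connectivity of $B = W \setminus A$ together with the adjacency condition is checkable in polynomial time. The base cases are the $W$ with $G[W]$ monochromatic (in particular single vertices), where $f(W,d)$ is $0$ or $1$. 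Finally $m_G(G,\omega) = \min_d f(V(G),d)$.

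The main obstacle will be the lower bound in the recurrence, namely justifying $f(W,d) = \sum_i f(A_i,d)$ in the merging case: one must argue that the subsequence $S_i$ of $S$ played in $A_i$ really floods the isolated subgraph $G[A_i]$ with colour $d$ (so that $\sum_i|S_i| = |S|$ decomposes the cost correctly), and then use the star structure to realise this sum by a single two-way split, keeping the recurrence binary and hence the per-entry work polynomial. I expect this to follow the template of Lemma \ref{T<=G} combined with Corollary \ref{decomposition}. A secondary point to handle carefully is the enumeration itself: one should generate all connected induced subgraphs with polynomial delay, so that the bound $p(|G|)$ guarantees the enumeration runs in polynomial time, and list for each its admissible two-way splits.
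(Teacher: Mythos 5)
Your proposal is correct and follows essentially the same route as the paper: a dynamic program over the (polynomially many) connected subgraphs, with the recurrence obtained by splitting $W$ into two connected parts joined by an edge (upper bound via Corollary \ref{decomposition}, lower bound by analysing the final move of an optimal sequence and using the star structure of the merged components), plus the $1+\min_{d'}$ term for a terminal recolouring, resolved exactly as you describe. The paper's proof is the unnumbered theorem following Theorem \ref{poly-areas}, and your treatment of the circularity and of the enumeration of connected subgraphs matches it.
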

It is easy to check that, if $G$ is a subdivision of some fixed graph $H$, the number of connected subgraphs of $G$ is bounded by a polynomial function of $|G|$, and so Theorem \ref{poly-areas} implies a conjecture of Meeks and Scott \cite{general}.
\begin{cor}
\textsc{Free-Flood-It} is solvable in polynomial time on subdivisions of any fixed graph $H$.
\label{subdivisions}
\end{cor}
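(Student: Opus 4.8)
The plan is to prove Corollary \ref{subdivisions} by verifying that any subdivision $G$ of a fixed graph $H$ belongs to the class $\mathcal{G}_p$ for some polynomial $p$ depending only on $H$, and then invoking Theorem \ref{poly-areas}. So the entire task reduces to a counting argument: I must show that the number of connected subgraphs of $G$ is bounded by a polynomial in $|G|$, where the degree of the polynomial is allowed to depend on $H$ but the bound must hold uniformly as the subdivision is refined.

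First I would fix notation for the subdivision. Write $H = (V_H, E_H)$ with $|V_H| = h$ and $|E_H| = m$; a subdivision $G$ of $H$ is obtained by replacing each edge $e \in E_H$ by an internal path of some length $\ell_e \geq 1$. Thus $G$ consists of the $h$ \emph{branch vertices} (the original vertices of $H$) together with, along each edge $e$, a path $P_e$ of subdivision (degree-$2$) vertices. Note that every edge of $G$ lies on exactly one such path $P_e$, and $|G| = h + \sum_{e \in E_H} (\ell_e - 1)$.

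The key step is to parametrise a connected subgraph $F$ of $G$ by the combinatorial data it induces on $H$. First I would observe that the intersection of a connected subgraph $F$ with any single path $P_e$ is a disjoint union of subpaths (intervals) of $P_e$. The crucial point is that, because the internal vertices of $P_e$ have degree $2$ in $G$, a connected subgraph $F$ can contain at most a \emph{bounded} number of separate intervals along $P_e$: any interval of $F$ lying strictly inside $P_e$ and meeting neither endpoint of $P_e$ would be a connected component of $F$ disconnected from the rest (since its only possible connections to the remainder of $F$ are through the two ends of $P_e$), so connectivity forces all but at most two intervals per edge-path to touch a branch vertex, and in fact $F \cap P_e$ is at most two intervals (one anchored at each end) unless $F$ is wholly contained in $P_e$. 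Hence I would argue that, apart from the $O(\sum_e \ell_e^2) = O(|G|^2)$ subgraphs contained within a single edge-path, every connected subgraph of $G$ is determined by: (i) which subset $S \subseteq V_H$ of branch vertices it contains, of which there are at most $2^h = O(1)$ choices; and (ii) for each edge $e$, how far $F$ extends into $P_e$ from each branch endpoint it contains, which is at most $\ell_e + 1 \leq |G|$ possibilities per end, giving at most $|G|^2$ choices per edge and hence at most $|G|^{2m} = O(|G|^{2m})$ choices in total. Multiplying through, the number of connected subgraphs of $G$ is at most $O(|G|^2) + 2^h \cdot |G|^{2m}$, which is bounded by a polynomial $p(|G|)$ of degree $\max\{2, 2m\}$ whose coefficients depend only on $H$.

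Having established $G \in \mathcal{G}_p$, the result follows at once: Theorem \ref{poly-areas} gives a polynomial-time algorithm to compute $m_G(G,\omega)$ for every $G \in \mathcal{G}_p$, and hence \textsc{Free-Flood-It} is solvable in polynomial time on subdivisions of $H$. \textbf{The main obstacle} I anticipate is making the case analysis in the counting step fully rigorous, specifically handling the interaction between the intervals of $F$ along an edge-path and the branch vertices at its ends without double-counting or omitting configurations (for instance, the case where $F$ contains both endpoints of $P_e$ but only a sub-interval of $P_e$, versus the case where it spans the whole path); the essential idea, however, is simply that degree-$2$ chains contribute only polynomially many ways to be partially occupied, so the exponential blow-up is confined to the bounded branch structure.
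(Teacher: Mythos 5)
Your proposal is correct and follows exactly the route the paper takes: the paper simply asserts that it is ``easy to check'' that a subdivision of a fixed graph $H$ has only polynomially many connected subgraphs and then invokes Theorem \ref{poly-areas}, whereas you supply the counting argument explicitly (classifying connected subgraphs by the branch vertices they contain and the extent to which they penetrate each subdivided edge from each end). The counting is sound -- the parametrisation need only be injective to give an upper bound, so the double-counting worries you raise are harmless -- and the conclusion follows as in the paper.
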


In the next theorem, we give an explicit bound on the time taken to solve \textsc{Free-Flood-It} in terms of the number of connected subgraphs in the graph we are considering.  Theorem \ref{poly-areas} follows immediately from this result.  The proof relies on Corollary \ref{general-decomposition}, which allows us to consider optimal sequences in distinct components of the graph independently.  

\begin{thm}
Let $G$ be a connected graph with colouring $\omega$ from colour-set $C$, and suppose $G$ has at most $N$ connected subgraphs.  Then we can compute $m_G(G,\omega,d)$ for every $d \in C$, and hence $m_G(G,\omega)$, in time $O(|C|^3 \cdot N^3)$.
\end{thm}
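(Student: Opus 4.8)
The plan is to set up a dynamic program over the vertex subsets of $G$ that induce connected subgraphs — of which there are at most $N$ — computing for each such subgraph $H$ and each colour $d$ the value $f(H,d) := m_H(H,\omega|_H,d)$. I would process these subgraphs in order of increasing size, so that every quantity appearing on the right-hand side of the recurrence refers either to a strictly smaller subgraph or to already-resolved data for $H$ itself. Since there are at most $N$ subgraphs and $|C|$ colours, if each entry can be evaluated in time polynomial in $N$ and $|C|$, the whole table is obtained in polynomial time, and in particular $f(G,d)$ for every $d$ together with $m_G(G,\omega)=\min_d f(G,d)$.

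The recurrence I propose separates the two ways an optimal flooding of $H$ with colour $d$ can end. For non-monochromatic $H$ with at least two vertices, define
$$f_{\mathrm{split}}(H,d) = \min\{\, f(H[A],d)+f(H[B],d) \,\},$$
the minimum over all partitions $V(H)=A\sqcup B$ into two nonempty parts inducing connected subgraphs, set $v^*(H)=\min_{d'\in C}f_{\mathrm{split}}(H,d')$, and put $f(H,d)=\min\{f_{\mathrm{split}}(H,d),\,v^*(H)+1\}$, with base cases $f(H,d)=0$ when $H$ is monochromatic in $d$ and $f(\{x\},d)=1$ for a single vertex coloured otherwise. The term $f_{\mathrm{split}}$ captures strategies that never make $H$ monochromatic before the final linking move, while $v^*(H)+1$ captures strategies that first flood $H$ to some colour $c$ and then recolour to $d$ in one extra move. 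Because $f_{\mathrm{split}}$ only refers to strictly smaller subgraphs it is computable without circularity, and the apparent self-reference through $v^*(H)+1$ is resolved by the observation that the globally cheapest colour for $H$ is always realised by a split, whence $\min_d f(H,d)=v^*(H)$.

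For correctness I would prove $f(H,d)=m_H(H,\omega|_H,d)$ by induction on $|V(H)|$. The inequality $m_H(H,\omega|_H,d)\le f(H,d)$ is the easy direction: each split term is realisable by Corollary \ref{general-decomposition}, and the recolour term is realisable because $m_H(H,\omega|_H,d)\le m_H(H,\omega|_H,c)+1$ for every $c$. The reverse inequality is where I expect the main obstacle. Taking an optimal sequence $S$ and its prefix up to the first move making $H$ monochromatic, in some colour $c$, gives $m_H(H,\omega|_H,d)\ge \ell+[c\neq d]$, where $\ell$ is the cost of reaching that first monochromatic state; since $|V(H)|\ge 2$, that final productive move links $r\ge 2$ monochromatic components $A_1,\dots,A_r$ whose associated move-subsequences partition $S$, yielding $\ell\ge\sum_i m_{H[A_i]}(A_i,\omega|_{A_i},c)$. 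The crux is then to show $\sum_i m_{H[A_i]}(A_i,\omega|_{A_i},c)\ge f_{\mathrm{split}}(H,c)$, which I would establish by contracting each $A_i$, taking a spanning tree of the resulting connected quotient, peeling off a leaf component $A_j$, and splitting $H$ as $A_j\sqcup(V(H)\setminus A_j)$ — both connected, with the complement handled inductively via Corollary \ref{general-decomposition}. This gives $m_H(H,\omega|_H,d)\ge f_{\mathrm{split}}(H,c)+[c\neq d]\ge f(H,d)$, closing the induction.

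Finally, for the running time I would enumerate the connected subgraphs and sort them by size; for each subgraph and each colour the evaluation of $f_{\mathrm{split}}$ ranges over candidate connected parts (at most $N$) together with a connectivity/partition check, and the resolution of $v^*$ and the recolour term contributes a further factor of $|C|$. Bounding the subgraph enumeration, the connectivity tests, and the colour loops crudely yields the stated $O(|C|^3\cdot N^3)$ bound. I regard this accounting as routine; the substantive content of the proof is the correctness of the recurrence, and in particular the leaf-peeling lower-bound argument described above.
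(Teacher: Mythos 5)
Your proposal is correct and follows essentially the same route as the paper: the same dynamic program over connected subgraphs and colours, the same recurrence (your $v^*(H)+1$ term is exactly the paper's $1+\min_{(A,B),d_2}\{m_A(A,\omega|_A,d_2)+m_B(B,\omega|_B,d_2)\}$), the upper bound via Corollary \ref{general-decomposition}, and the lower bound via case analysis on the final move of an optimal sequence. The only cosmetic difference is the choice of split in the lower bound — you peel a leaf component off a spanning tree of the quotient, whereas the paper takes $A=Y_1$ and $B=X\cup Y_2\cup\dots\cup Y_r$ directly — but both yield a valid pair of connected parts and the same conclusion.
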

\begin{proof}
Note that we may assume without loss of generality that $\omega$ is a proper colouring of $G$, otherwise we can contract monochromatic components to obtain an equivalent coloured graph.  Let $\mathcal{H}$ be the set of connected subgraphs of $G$.  We compute $m_H(H,\omega|_H,d_1)$ recursively, for each $H \in \mathcal{H}$ and $d_1 \in C$.  For any $H \in \mathcal{H}$ we write $(A,B) \in \splitgraph(H)$ if $A$ and $B$ are connected proper subgraphs of $H$ such that $V(A) \cup V(B) = V(H)$ and $V(A) \cap V(B) = \emptyset$.

We define a function $m^*(H,\omega|_H,d_1)$, and claim that for any $H \in \mathcal{H}$ and $d_1 \in C$, we have $m_H(H,\omega|_H,d_1) = m^*(H,\omega|_H,d_1)$.  We first define
\[m^*(\{v\},\omega|_{\{v\}},d_1) = \begin{cases}
									0	& \mbox{if $\omega(v) = d_1$} \\
									1	& \mbox{otherwise.}
								\end{cases}
\]
and observe that this gives $m_H(H,\omega|_H,d_1) = m^*(H,\omega|_H,d_1)$ whenever $|H| = 1$.  Further values of $m^*$ are defined recursively as follows:
\begin{align}
m^*(H,\omega|_H,d_1) = &  \nonumber \\
                  \min \{ & \min_{(A,B) \in \splitgraph(H)} \{m_A(A,\omega|_A,d_1) + m_B(B,\omega|_B,d_1)\}, \nonumber \\
												        & 1 + \min_{\substack{(A,B) \in \splitgraph(H) \\ d_2 \in C}} \{m_A(A,\omega|_A,d_2) + m_B(B,\omega_B,d_2)\}\},
\label{poly-recursion}
\end{align}									       

The fact that $m_H(H,\omega|_H,d_1) \leq m^*(H,\omega|_H,d_1)$ follows from Corollary \ref{decomposition}.  To see the reverse inequality in the case that $|H| > 1$ (and so by assumption $H$ is not monochromatic under $\omega$), we consider the final move $\alpha$ in an optimal sequence to flood $H$ with colour $d_1$: either $\alpha$ changes the colour of some monochromatic area $X$, linking it to monochromatic areas $Y_1, \ldots, Y_r$ which already have colour $d_1$, or else $H$ is already monochromatic in some colour $d_2$ before the final move, and $\alpha$ simply changes its colour to $d_1$.  In the first case, we set $A = Y_1$ and $B = X \cup Y_2 \cup \ldots \cup Y_r$, and note that the disjoint subsequences of $S$ consisting of moves played in $A$ and $B$ respectively flood the relevant subgraphs with colour $d_1$.  Hence $|S| \geq m_A(A,\omega|_A,d_1) + m_B(B,\omega|_B,d_1)$.  In the case that $H$ is monochromatic before $\alpha$, we observe that $H$ cannot be monochromatic before the penultimate move of $S$ (otherwise $S$ would not be optimal) and apply the reasoning above to the initial segment $S'$ of $S$ in which the final move is omitted, a sequence which floods $H$ with colour $d_2$: there exists $(A,B) \in \splitgraph(H)$ such that $|S'| \geq m_A(A,\omega|_A,d_2) + m_B(B,\omega|_B,d_2)$, and hence $|S| \geq 1 + m_A(A,\omega|_A,d_2) + m_B(B,\omega|_B,d_2)$.  Thus in either case we have $m^*(H,\omega|_H,d_1) \leq m_H(H,\omega|_H,d_1)$.

Observe that every subgraph on the right hand side of (\ref{poly-recursion}) contains strictly fewer vertices than $H$, and so a recursion based on this relationship will terminate.  Thus it remains to show that we can calculate $m^*(H,\omega|_H,d_1)$ for all $H \in \mathcal{H}$ and $d_1 \in C$ in time $O(|C|^3 \cdot N^3)$.  

First we need to construct a list of all connected subgraphs of $G$.  Clearly each vertex in the graph is a connected subgraph of order one, and given all connected subgraphs of order $k$ we can construct all connected subgraphs of order $k+1$ by considering all possible ways of adding a vertex.  Thus, if $N_i$ denotes the number of connected subgraphs of order $i$ in $G$, we can construct the list in time
$$ n + \sum_{i=1}^{n-1} N_i(n-i) \leq n \cdot N = O(N^2).$$

To compute $m^*$, we begin by initialising the table in time $O(|G|)$, then all further values of $m^*$ are then calculated as the minimum over combinations of two other entries.  As our table has $N \cdot |C|$ entries, there are at most $N^2 \cdot |C|^2$ combinations we need to consider, and so we can compute all entries in time at most $O(N^3 \cdot |C|^3)$.  This immediately gives $m_G(G,\omega,d_1)$ for each $d_1 \in C$, and to compute $m_G(G,\omega)$ we simply take the minimum over $|C|$ entries.  Thus we can compute both $m_G(G,\omega,d)$ and $m_G(G,\omega)$ in time $O(N^3 \cdot |C|^3)$.
\end{proof}

\subsection{The FIXED-FLOOD-IT case}
\label{fixed}

In this section we show that the fixed variant, \textsc{Fixed-Flood-It}, can be solved in polynomial time on any coloured graph $(G, \omega)$ in which there are a polynomial number of connected subgraphs.  As a special case, this gives an alternative proof of Lagoutte's result \cite{lagoutte} that \textsc{Fixed-Flood-It} $\in \mathbf{P}$ when restricted to cycles, as a cycle has only a quadratic number of connected subgraphs.

When considering the fixed variant of the game, we use the same notation as before but add a superscript to denote the fixed root vertex at which we play, writing for example $m_G^{(v)}(G,\omega,d)$ for the minimum number of moves that must be played at $v$ to give all vertices colour $d$.

\begin{thm}
Let $p$ be a polynomial, and let $\mathcal{G}_p$ be the class of graphs such that, for any $G \in \mathcal{G}_p$, the number of connected subgraphs of $G$ is at most $p(|G|)$.   Suppose $G \in \mathcal{G}_p$ has colouring $\omega$ from colour-set $C$.  Then, for any $d \in C$ and $v \in V(G)$, we can compute $m_G^{(v)}(G,\omega,d)$ in polynomial time, and hence we can also compute $m_G^{(v)}(G,\omega)$ in polynomial time.
\label{poly-areas-fixed}
\end{thm}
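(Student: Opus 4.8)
The plan is to exploit the special structure of the fixed variant: because every move is played at the root $v$, the only part of the colouring that ever changes is the monochromatic component containing $v$. I would first record this as the central structural observation. At any point in a play of \textsc{Fixed-Flood-It}, the colouring of $G$ is determined entirely by a pair $(R,c)$, where $R = \comp_G(v,\cdot)$ is the current (maximal) monochromatic component containing $v$ and $c$ is its colour: every vertex outside $R$ still carries its original colour $\omega$. Moreover $R$ is always a connected subgraph of $G$ containing $v$, of which, by hypothesis, there are at most $p(|G|)$.

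Next I would analyse the effect of a single move. Starting from state $(R,c)$ and playing $(v,d)$, the component $R$ is recoloured to $d$ and then absorbs exactly those vertices outside $R$ that thereby become linked to it; the resulting component is $f_d(R) := \comp_G(v,\omega^d_R)$, where $\omega^d_R$ is the colouring that assigns $d$ to every vertex of $R$ and agrees with $\omega$ elsewhere. The key point, which I would emphasise, is that $f_d(R)$ depends only on $R$ and $d$ and \emph{not} on the current colour $c$ (the neighbours of $R$ lying outside $R$ all carry their $\omega$-colours regardless of $c$), and that $R \subseteq f_d(R)$, so the root component never shrinks.

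With this in hand I would reduce the problem to a shortest-path computation. Form the directed state graph $D$ whose vertices are the pairs $(R,c)$ with $R$ a connected subgraph of $G$ containing $v$ and $c \in C$, placing a unit-weight edge $(R,c)\to(f_d(R),d)$ for each $d \in C$. I would then prove, by an easy induction on the number of moves in each direction, that playing a sequence of moves at $v$ corresponds exactly to following a walk in $D$ from $s := (\comp_G(v,\omega),\omega(v))$: after $i$ moves the colouring of $G$ is precisely the one encoded by the $i$-th vertex of the walk. Consequently $m_G^{(v)}(G,\omega,d)$ equals the length of a shortest $s$-to-$(V(G),d)$ walk in $D$, and $m_G^{(v)}(G,\omega)$ is the shortest distance from $s$ to the set $\{(V(G),c) : c \in C\}$.

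Finally I would bound the running time. The connected subgraphs of $G$ containing $v$ can be enumerated as in the proof of the free case, giving at most $p(|G|)$ of them, so $D$ has at most $p(|G|)\cdot|C|$ vertices and at most $p(|G|)\cdot|C|^2$ edges; each transition $f_d(R)$ is computed by a single graph traversal, and since all edge weights equal $1$ a breadth-first search from $s$ returns all the required distances. Every step is polynomial in $|G|$ and $|C|$. I expect the only genuine obstacle to be setting up and verifying the two-way correspondence between move sequences and walks in $D$ cleanly, and in particular justifying that the current colour $c$ may be dropped from the transition so that the reachable state space really is only polynomially large; the remaining steps are routine bookkeeping.
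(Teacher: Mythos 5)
Your proposal is correct and follows essentially the same route as the paper: both reduce \textsc{Fixed-Flood-It} to a shortest-path computation in a state digraph whose vertices are pairs $(R,c)$ with $R$ a connected subgraph of $G$ containing $v$ and $c \in C$, of which there are at most $p(|G|)\cdot|C|$. The only cosmetic differences are that the paper specifies the edges by a static condition on $(A_i,d_i)$ and $(A_j,d_j)$ rather than via your transition function $f_d$, and uses Dijkstra rather than breadth-first search.
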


\begin{proof}
We define a set of \emph{states} $\mathcal{S}$, where each $S_i \in \mathcal{S}$ is a pair $(A_i,d_i)$ with $d_i \in C$, and $A_i$ a connected subgraph of $G$ containing $v$; we say we are in state $S_i$ if $A_i$ is the maximal monochromatic component containing $v$ and has colour $d_i$.  We now construct a digraph $D$ with vertex-set $\mathcal{S}$ and edge set $E$, where $(A_i,d_i)(A_j,d_j) \in E$ if and only if $A_i \subseteq A_j$, $A_j \setminus A_i$ is either empty or has colour $d_j$ under $\omega$, and no vertex in $\Gamma(A_j) \setminus A_j$ has colour $d_j$.  Thus there is a directed edge from $S_i$ to $S_j$ if and only if we can reach state $S_j$ from state $S_i$ with a single move.  Note that we can construct $D$ in time $O(|\mathcal{S}|^2)$.

Let us denote by $S_0$ the initial state (so the tuple $S_0$ consists of the maximal monochromatic area containing $v$ at the start, and its initial colour under $\omega$), and by $S_d$ the state $(G,d)$ in which the entire graph is monochromatic with colour $d$.  Then the problem of computing $m^{(v)}(G,\omega,d)$ is exactly that of finding the length of a shortest path from $S_0$ to $S_d$ in $D$, which can be done for all $d \in C$ in time $O(|\mathcal{S}|^2)$ (by Dijkstra's algorithm; see \cite{clr90}).  By assumption, $|\mathcal{S}| \leq p(|G|) \cdot |C|$, and so we can construct $D$ and compute $m^{(v)}(G,\omega,d)$ in time $O(p(|G|)^2 \cdot |C|^2)$.  To calculate $m^{(v)}(G,\omega)$ we simply have to take the minimum over $|C|$ values from this computation, so we can calculate $m^{(v)}(G,\omega)$ in time $O(p(|G|)^2 \cdot |C|^2)$.
\end{proof}

\section{The complexity of $k$-LINKING FLOOD IT}
\label{linking}

In this section we use results from Section \ref{trees} to show that $k$-\textsc{Linking-Flood-It}, the problem of determining the minimum number of moves required to link some given set of $k$ points (when moves can be played at any vertex), is solvable in polynomial time for any fixed $k$.

We begin with some additional notation.  Let $U$ be a subset of $V(G)$.  We will say $(U_1,U_2) \in \partition(U)$ if $U_1$ and $U_2$ are disjoint nonempty subsets of $U$ such that $U = U_1 \cup U_2$.  Recall that $\mathcal{T}(U,G)$ is the set of all subtrees $T$ of $G$ such that $U \subseteq V(T)$.  For $1 \leq i \leq |G|$, set $\mathcal{T}_i(U,G) = \{T \in \mathcal{T}(U,G): |T| \leq i\}$.

Recall from Corollary \ref{set-tree} that, for any $U \subseteq V(G)$,
$$m_G(U,\omega,d) = \min_{T \in \mathcal{T}(U,G)} m_T(T,\omega|_T,d).$$
We use this result to give a dynamic programming algorithm to solve $k$-\textsc{Linking-Flood-It} in polynomial time, for any fixed $k$.  Note that the statement of the theorem assumes that the initial colouring of the graph being considered is proper, but of course if this is not the case we can simply contract monochromatic components to obtain an equivalent properly coloured graph.

\begin{thm}
Let $G = (V,E)$ be a connected graph of order $n$, with proper colouring $\omega$ from colour-set $C$, and let $U \subseteq V$ with $|U| = k$.  Then, for any $d \in C$, we can compute $m_G(U,\omega,d)$ in time $O(n^{k+3} \cdot |E| \cdot |C|^2 \cdot 2^k)$.
\label{k-points}
\end{thm}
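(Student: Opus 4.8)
The starting point is Corollary \ref{set-tree}, which reduces the problem to minimising a flood number over trees: since
$$m_G(U,\omega,d) = \min_{T \in \mathcal{T}(U,G)} m_T(T,\omega|_T,d),$$
it suffices to find the least flood number, to colour $d$, of any subtree of $G$ that contains all of $U$. There may be exponentially many such subtrees, so rather than enumerate them I would run a Dreyfus--Wagner-style dynamic program that assembles candidate subtrees from smaller ones while simultaneously computing their flood numbers through the tree recursion already established in (\ref{poly-recursion}). Concretely, I would fill a table $\poss$ whose entries are indexed by a terminal set $W \subseteq U$, a distinguished vertex $r$ of the subtree, a size bound $i$, and a target colour $c$; the entry records the minimum number of moves needed to flood \emph{some} subtree $T$ with $r \in V(T)$, $V(T) \cap U \supseteq W$, and $|V(T)| \le i$ so that all of $T$ becomes colour $c$ (equivalently, the least flood-number over subtrees of size at most $i$ that contain $W$). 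The desired quantity $m_G(U,\omega,d)$ is then the minimum of these entries over all $r$ and all $i \le n$, taking $W = U$ and $c = d$.

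The recursion is driven by (\ref{poly-recursion}) and Lemma \ref{tree-splitting}. Single vertices give the base case, costing $0$ or $1$ according to whether the vertex already has the target colour. Each larger entry is produced in one of two ways. A \emph{merge} step takes two previously-computed subtrees, with roots $x$ and $y$, and joins them along an edge $xy \in E$ of $G$; this is where the factor $|E|$ enters, as we iterate over the connecting edge. The terminal set of the combined tree is split as $(W_1,W_2) \in \partition(W)$ between the two pieces, giving the factor $2^k$, and the two move-counts are combined exactly as in (\ref{poly-recursion}): either both pieces are flooded to the common colour $c$, or both are flooded to a common intermediate colour $c'$ and a single extra move then recolours their union to $c$. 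Iterating over this intermediate colour $c'$ (in addition to the target colour stored in the state) is what produces the $|C|^2$ factor. Indexing by the size bound $i$ ensures that every piece fed into a merge is strictly smaller, so the recursion is well-founded and can be evaluated in order of increasing size.

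To prove that each table entry is achievable I would use the decomposition corollaries. If the two merged subtrees happen to be disjoint, their union together with the edge $xy$ is a genuine tree and (\ref{poly-recursion}) applies directly; if they overlap, Corollary \ref{general-decomposition} still bounds the flood number of the induced union $G[V(T_1) \cup V(T_2)]$ by the sum of the two piece-counts (using Corollary \ref{add-edges} to pass from each piece to the subgraph it spans), and Theorem \ref{min-tree} then supplies an actual spanning tree of that union --- a real subtree containing $W_1 \cup W_2$ of no greater flood number. Because the state records only a lower bound on the terminals and an upper bound on the size, such overlaps are harmless: they merely place a valid, achievable value into a correctly-indexed slot. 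For the converse, that the table captures the true optimum, I would take an optimal $T^* \in \mathcal{T}(U,G)$ and an optimal flooding sequence, and split $T^*$ along its final move exactly as in the proofs of Lemma \ref{tree-splitting} and (\ref{poly-recursion}). Since this decomposition is carried out on an actual tree, the two resulting pieces are genuinely disjoint, their terminals partition $U$, their sizes add, and they are joined by a single edge; hence a merge transition reconstructs the optimal value, and an induction on $|V(T^*)|$ shows that the table entry for $(U,\cdot,n,d)$ equals $m_G(U,\omega,d)$.

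The step I expect to be the main obstacle is precisely this interplay between \emph{building} a subtree and \emph{flooding} it. The dynamic program cannot afford to track full vertex sets (that would be exponential), so it must be allowed to merge subtrees that secretly overlap off the connecting edge; the difficulty is to guarantee that the value computed is still exactly the minimum flood number over honest subtrees. The resolution is the asymmetry already noted: Corollary \ref{general-decomposition} (with Corollary \ref{add-edges} and Theorem \ref{min-tree}) makes every overlapping merge yield a legitimate upper bound achieved by a real spanning tree, while the lower bound is established only through the clean, genuinely disjoint decomposition of an optimal tree, so no spurious saving can slip in. Once correctness is in hand the running time is a matter of bookkeeping: counting the entries of $\poss$ and the work per entry --- the partition of the terminals, the choice of roots and size split, the connecting edge, and the intermediate colour --- gives the stated bound of $O(n^{k+3} \cdot |E| \cdot |C|^2 \cdot 2^k)$.
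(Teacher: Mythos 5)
Your proposal is correct and follows essentially the same route as the paper: reduce via Corollary \ref{set-tree} to minimising over subtrees containing $U$, then run a Dreyfus--Wagner-style dynamic program indexed by terminal subset, size bound and colour, with merges over an edge of $G$, an intermediate colour giving the $|C|^2$ factor, and the overlap problem resolved exactly as in the paper by combining Corollary \ref{general-decomposition} with Theorem \ref{min-tree} to extract an honest spanning subtree, while the converse bound comes from splitting an optimal tree at its final move. The only (harmless) difference is bookkeeping: you carry a distinguished root vertex and keep terminal sets inside $U$, whereas the paper absorbs the endpoints of the connecting edge into the terminal sets, letting them range over all $\le k$-subsets of $V$.
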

\begin{proof}
We demonstrate a dynamic programming algorithm to compute values of a function $f$, taking as arguments a nonempty subset $W \subset V$ of at most $k$ vertices, the initial colouring $\omega$ of the graph, a colour $d_1 \in C$, and an index $i \in \{1, \ldots, n\}$.  We will show that, for any values of these arguments, we have
\begin{equation*}
f(W,\omega,d_1,i) = \begin{cases}
                      \min_{\substack{T \in \mathcal{T}_i(W,G)}} m_T(T,\omega|_T,d_1) & \text{if $\mathcal{T}_i(W,G) \neq \emptyset$} \\
                      \infty											  & \text{otherwise.}
                   \end{cases}
\end{equation*}
Thus, as $\mathcal{T}_n(U,G) \neq \emptyset$, we see by Corollary \ref{set-tree} that
$$m_G(U,\omega,d) = \min_{T \in \mathcal{T}(U,G)} m_T(T,\omega|_T,d) = \min_{T \in \mathcal{T}_n(U,G)} m_T(T,\omega|_T,d) = f(U,\omega,d,n).$$

We initialise our table by setting
\begin{align*}
f(W,\omega,d_1,1) = \begin{cases}
						\infty            & \text{if $|W| \geq 2$}\\					
						1				  & \text{if $W=\{w\}$ and $\omega(w) \neq d_1$} \\						
						0				  & \text{if $W=\{w\}$ and $\omega(w) = d_1$,} \\
					  \end{cases} \\
\end{align*}	
and observe that this gives the desired value of $f(W,\omega,d_1,1)$ for all choices of $W$ and $d_1$.

We define further values of $f$ recursively.  First, for any $W$, $\omega$, $d_1$ and $i$, we set
\begin{align*}
\poss(W,\omega,d_1,i) = \{ & ((W_1 \cup \{x_1\}, \omega, d_1, j_1),(W_2 \cup \{x_2\}, \omega, d_1, j_2)):  \\
						   & (W_1,W_2) \in \partition(W), x_1x_2 \in E, x_1 \notin W_2, x_2 \notin W_1, \\
						   & j_1 + j_2 = i, j_1,j_2 > 0\},
\end{align*}
so there is an element of $\poss(W,\omega,d_1,i)$ corresponding to each way of partitioning $W$ into two non-empty subsets, each way of picking an edge in $G$ and associating one endpoint with each subset, and each pair of positive integers summing to $i$.
We then define, for $i \geq 2$,
\begin{equation*}
f_1(W,\omega,d_1,i) = \begin{cases}
						\min_{(\mathbf{z_1},\mathbf{z_2}) \in \poss(W,\omega,d_1,i)} \{f(\mathbf{z_1}) + f(\mathbf{z_2})\} & \text{if $\poss(W,\omega,d_1,i) \neq \emptyset$} \\
						\infty & \text{otherwise,}
					  \end{cases}
\end{equation*}
and
\begin{equation*}
f_2(W,\omega,d_1,i) = 1 + \min_{d_2 \in C} \{f_1(W,\omega,d_2,i)\}.
\end{equation*}
Finally we set
\begin{equation}
f(W,\omega,d_1,i) = \min \{f_1(W,\omega,d_1,i), f_2(W,\omega,d_1,i), f(W,\omega,d_1,i-1)\}.
\label{k-linking-recursion}
\end{equation}

To show that $f$ has the required properties, we first prove by induction on $i$ that we have $f(W,\omega,d_1,i) \leq \min_{T \in \mathcal{T}_i(W,G)} m_T(T,\omega|_T,d_1)$ for each choice of $W$ and $d_1$, if $\mathcal{T}_i(W,G) \neq \emptyset$.  Later we will also prove the reverse inequality.  We have already seen that equality holds in the base case, for $i=1$, so let us consider the case for $i>1$ and assume that the result holds for smaller values.  

If $|W| = 1$, it is clear that 
$$\min_{T \in \mathcal{T}_i(W,G)} m_T(T,\omega|_T,d_1) = \min_{T \in \mathcal{T}_1(W,G)} m_T(T,\omega|_T,d_1) = f(W,\omega,d_1,1).$$
Thus we have
$$\min_{T \in \mathcal{T}_i(W,G)} m_T(T,\omega|_T,d_1) = f(W,\omega,d_1,1) \geq f(W,\omega,d_1,2) \geq \cdots \geq f(W,\omega,d_1,i),$$
as required.  So we may assume $|W| \geq 2$.

We may assume that there exists at least one subtree of $G$ of order at most $i$ that contains the vertices of $W$.  Fix $T \in \mathcal{T}_i(W,G)$ such that $m_T(T,\omega|_T,d_1) = \min_{T' \in \mathcal{T}_i(W,G)} m_{T'}(T',\omega|_{T'},d_1)$ and $|T|$ is minimal.  As $|T| \geq |W| \geq 2$ and $\omega$ is a proper colouring of $G$, $T$ is not monochromatic under $\omega$.  Let $S$ be an optimal sequence to flood $T$ with colour $d_1$.  We proceed by case analysis on the final move, $\alpha$, of $S$.

If $T$ is not monochromatic immediately before $\alpha$, this final move must link $r \geq 2$ monochromatic components, with vertex-sets $A_1, \ldots, A_r$.  We may assume $\alpha$ is played in $A_1$, and so that each of $A_2, \ldots, A_r$ is adjacent only to $A_1$ in $T$.  By minimality of $T$, we cannot have $W \subseteq V(A_l)$ for any $A_l$, so without loss of generality we may assume that $\emptyset \neq W \cap A_2 \neq W$.  

Let $x_1x_2$ be the unique edge of $T$ with $x_1 \in A_1$ and $x_2 \in A_2$.  For $l \in \{1,2\}$, set $T_l$ to be the component of $T - \{x_1x_2\}$ that contains $x_l$, $W_l=W \cap V(T_l)$, and $j_l = |T_l|$.  Note that $((W_1 \cup \{x_1\}, \omega, d_1, j_1),(W_2 \cup \{x_2\}, \omega, d_1, j_2)) \in \poss(W,\omega,d_1,i)$, and also that for $l \in\{1,2\}$, $T_l \in T_{j_l}(W_l \cup \{x_l\},G)$.  Set $S_l$ to be the subsequence of $S$ consisting of moves played in $T_l$, and observe that $S_1$ and $S_2$ partition $S$. Moreover $S_l$, played in $T_l$, makes this tree monochromatic with colour $d_1$, so we have $m_{T_l}(T_l,\omega|_{T_l},d_1) \leq |S_l|$. 
 
Observe also that, as $j_1, j_2 < i$, the inductive hypothesis implies that, for $l \in \{1,2\}$, \begin{align*}
f(W_l \cup \{x_l\}, \omega, d_1, j_l) & \leq \min_{T' \in \mathcal{T}_{j_l}(W_l \cup \{x_l\},G)} m_{T'}(T',\omega|_{T'},d_1) \\
									  & \leq m_{T_l}(T_l,\omega|_{T_l},d_1).
\end{align*}

Hence we see that
\begin{align*}
f(W,\omega,d_1,i)     & \leq f_1(W,\omega,d_1,i) \\
					  &	\leq f(W_1 \cup \{x_1\}, \omega, d_1,j_1) + f(W_2 \cup \{x_2\}, \omega, d_1, j_2) \\
                      & \qquad \qquad \qquad \qquad \qquad \qquad \mbox{by definition} \\
                      & \leq m_{T_1}(T_1,\omega|_{T_1},d_1) + m_{T_2}(T_2,\omega|_{T_2},d_1) \\
                      & \leq |S_1| + |S_2| \\
                      & = |S| \\
                      & = m_T(T,\omega|_T,d_1) \\
                      & = \min_{T' \in \mathcal{T}_i(W,G)} m_{T'}(T',\omega|_{T'},d_1) \\
                      & \qquad \qquad \qquad \qquad \qquad \qquad \mbox{by choice of $T$}.
\end{align*}

We now consider the case in which $T$ is monochromatic in some colour $d_2 \in C$ before the final move of $S$, so $\alpha$ simply changes the colour to $d_1$.  Note that $T$ cannot be monochromatic before the penultimate move of $S$, otherwise we could obtain a shorter sequence to flood $T$ with colour $d_1$.  Set $S'$ to be the initial segment of $S$ with just the final move omitted.  Then $S'$ must be an optimal sequence to flood $T$ with colour $d_2$, and does not make $T$ monochromatic before the final move, so we can apply the reasoning above to see that 
$$f_1(W,\omega,d_2,i) \leq |S'| = m_T(T,\omega|_T,d_1) - 1.$$
But then
\begin{align*}
f(W,\omega,d_1,i) & \leq f_2(W,\omega,d_2,i) \\
				  & \leq 1 + f_1(W,\omega,d_2,i) \\
                   & \qquad \qquad \qquad \mbox{by definition} \\
                     & \leq m_T(T,\omega|_T,d_1) \\
                     & = \min_{T' \in \mathcal{T}_i(W,G)} m_{T'}(T',\omega|_{T'},d_1) \\
                     & \qquad \qquad \qquad \mbox{by choice of $T$},
\end{align*}
completing the proof that $f(W,\omega,d_1,i) \leq \min_{T \in \mathcal{T}_i(W,G)} m_T(T,\omega|_T,d_1)$, for every $W$ and $d_1$.

To prove the correctness of the algorithm, it remains to show that for any $W \subset V$ (containing at most $k$ vertices) and $d_1 \in C$ we also have $f(W,\omega,d_1,i) \geq \min_{T \in \mathcal{T}_i(W,G)} m_T(T,\omega|_T,d_1)$.  Once again, we proceed by induction on $i$, and note that the base case for $i = 1$ holds.  Assume that $i > 1$ and that the result holds for smaller values.

In fact we prove the following claim.
\begin{claim}
For any $(W_1,W_2) \in \partition(W)$, $d_1 \in C$, $x_1x_2 \in E$ with $x_1 \notin W_2$ and $x_2 \notin W_1$, and $j_1,j_2 > 0$ such that $j_1+j_2=i$, we have
\begin{align*}
\min_{T \in \mathcal{T}_i(W,G)} m_T & (T,\omega|_T,d_1) \leq  \nonumber \\
                                    & f(W_1 \cup \{x_1\},\omega,d_1,j_1) + f(W_2 \cup \{x_2\},\omega,d_1,j_2).
\end{align*}
\end{claim}
To see that it is sufficient to prove this claim, first observe that the claim implies immediately that
\begin{equation}
\min_{T \in \mathcal{T}_i(W,G)} m_T(T,\omega|_t,d_1) \leq f_1(W,\omega,d_1,i).
\label{T<=f1}
\end{equation}
Observe also that
\begin{align*}
\min_{T \in \mathcal{T}_i(W,G)} m_T(T,\omega|_T,d_1) & \leq \min_{T \in \mathcal{T}_{i-1}(W,G)} m_T(T,\omega|_T,d_1) \\
   & \qquad \qquad \qquad \qquad \mbox{as $\mathcal{T}_{i-1}(W,G) \subseteq \mathcal{T}_i(W,G)$} \\
     											     & \leq f(W,\omega,d_1,i-1) \\
     											     & \qquad \qquad \qquad \qquad \mbox{by inductive hypothesis.}
\end{align*}     											     
Moreover, it is clear that for any $d_2 \in C$, 
$$\min_{T \in \mathcal{T}_i(W,G)} m_T(T,\omega|_T,d_1) \leq 1 + \min_{T \in \mathcal{T}_i(W,G)} m_T(T,\omega|_T,d_2)$$ 
and so, if the claim holds, it follows from (\ref{T<=f1}) that
$$\min_{T \in \mathcal{T}_i(W,G)} m_T(T,\omega|_T,d_1) \leq f_2(W,\omega,d_1,i).$$
Thus, if the claim holds, it follows that $\min_{T \in \mathcal{T}_i(W,G)} m_T(T,\omega|_T,d_1)$ is less than or equal to every expression on the right hand side of (\ref{k-linking-recursion}), giving $f(Q,\omega,d_1,i) \geq \min_{T \in \mathcal{T}_i(W,G)} m_T(T,\omega,d_1)$, as required.  Hence it is indeed sufficient to prove the claim.

We now prove the claim.  Suppose $(W_1,W_2) \in \partition(W)$, $d_1 \in C$, $x_1x_2 \in E$ with $x_1 \notin W_2$ and $x_2 \notin W_1$, and $j_1,j_2 > 0$ such that $j_1+j_2=i$.  For $l \in \{1,2\}$, pick $T_l \in \mathcal{T}_{j_l}(W_l \cup \{x_l\}, G)$ such that 
$$m_{T_l}(T_l,\omega|_{T_l},d_1) = \min_{T' \in \mathcal{T}_{j_l}(W_l \cup \{x_l\}, G)} m_{T'}(T',\omega|_{T'},d_1).$$  
Note that, by the inductive hypothesis (as $j_l < i$), we then have 
$$m_{T_l}(T_l,\omega|_{T_l},d_1) \leq f(W_l \cup \{x_l\},\omega,d_1,j_l).$$
Now set $H = T_1 \cup T_2 \cup \{x_1x_2\}$, and fix a $d_1$-minimal spanning tree $T'$ of $H$ (so, by Theorem \ref{min-tree}, $m_H(H,\omega|_H,d_1) = m_{T'}(T',\omega|_{T'},d_1)$).  Note that $T' \in \mathcal{T}_i(W,G)$.  Thus we see that
\begin{align*}
\min_{T \in \mathcal{T}_i(W,G)} m_T(T,\omega|_T,d_1) & \leq m_{T'}(T',\omega|_{T'},d_1) \\
													 & = m_H(H,\omega|_H,d_1) \\
													 & \qquad \qquad \qquad \mbox{by choice of $T'$} \\
                                                     & \leq m_{T_1}(T_1,\omega|_{T_1},d_1) + m_{T_2}(T_2,\omega|_{T_2},d_1) \\
                                                     & \qquad \qquad \qquad \mbox{by Corollaries \ref{general-decomposition} and \ref{subgraph}} \\
                                                     & \leq f(W_1 \cup \{x_1\}, \omega, d_1, j_1) + f(W_2 \cup \{x_2\}, \omega, d_1, j_2),
\end{align*}
completing the proof of the claim.

It remains only to bound the time taken to compute $f(U,\omega,d,n)$.  Note that each value of $f(W,\omega,d_1,1)$ (for any $W \subset V$ of size at most $k$ and $d_1 \in C$) can be computed in constant time.  

Suppose we have computed the value of $f(W,\omega,d_1,i)$ for each $d_1 \in C$ and $W \subset V$ of size at most $k$.  To compute $f_1(W',\omega,d_2,i+1)$ for any $W'$ and $d_2$, we take the minimum over at most $2^k$ ways to partition a set of up to $k$ points, the $|E|$ edges in the graph, the $|C|$ colours in the initial colouring, and the $2(i-1)$ ordered pairs of positive integers that sum to $i$.  Thus we take the minimum over a set of $O(2^k \cdot |E| \cdot |C| \cdot n)$ values, each of which can be computed in time $O(n)$ by adding a pair of existing values in the table, and so compute $f_1(W',\omega,d_2,i+1)$ in time $O(2^k \cdot |E| \cdot |C| \cdot i \cdot n) = O(2^k \cdot |E| \cdot |C| \cdot n^2)$.  

Once we have computed the value of $f_1$ for all entries with index $i+1$, we can compute $f_2$ for each such entry in time $O(|C|)$.  Given the values of $f_1$ and $f_2$ for each entry with index $i+1$, and the values of $f$ for entries with index $i$, we can compute $f$ for any entry with index $i+1$ in constant time.

Thus in total we require time at most $O(2^k \cdot |E| \cdot |C| \cdot n^2)$ to compute the value of $f$ for each entry in the table.  In total, the table contains $O(n^{k+1} \cdot |C|)$ entries (as there are $O(n^k)$ subsets of size at most $k$, a choice of $|C|$ colours, and $i$ takes integer values in the range $[1,n]$), so we can compute all entries, and hence determine $f(U,\omega,d,n)$, in time $O(n^{k+3} \cdot |E| \cdot |C|^2 \cdot 2^k)$.
\end{proof}

\section{Conclusions and Open Problems}

We have shown that, for any connected graph $G$, the minimum number of Flood-It moves required to make $G$ monochromatic in colour $d$ is equal to the minimum, taken over all spanning trees $T$ of $G$, of the number of moves required to flood $T$ with colour $d$. 

Using this result, we saw that \textsc{Free-Flood-It}, and the fixed variant, are solvable in polynomial time on graphs with only a polynomial number of connected subgraphs.  This proves a conjecture of Meeks and Scott \cite{general}: \textsc{Free-Flood-It} is solvable in polynomial time on subdivisions of any fixed graph. This in turn implies that \textsc{Free-Flood-It} is polynomially solvable on trees with bounded degree and a bounded number of vertices of degree at least three, although the problem is known to be NP-hard on arbitrary trees.  It would be interesting to investigate other classes of trees on which the problem can be solved in polynomial time.

Finally, we applied the result on spanning trees to $k$-\textsc{Linking-Flood-It}, demonstrating an algorithm to solve the problem in time $n^{O(k)}$.  There is potential for further investigation of the parameterised complexity of this problem, with parameter $k$: can $k$-\textsc{Linking-Flood-It} be shown to be W[1]-hard, or is there another approach to the problem which might yield a fixed-parameter algorithm?  Such an investigation could also consider a ``fixed'' variant of $k$-\textsc{Linking-Flood-It}, in which all moves must be played at some fixed vertex.

\end{document}